\theoremstyle{plain}
\newtheorem{theorem}{Theorem}
\newtheorem{definition}{Definition}
\newtheorem{proposition}{Proposition}
\begin{document}

% Use the \preprint command to place your local institutional report
% number in the upper righthand corner of the title page in preprint mode.
% Multiple \preprint commands are allowed.
% Use the 'preprintnumbers' class option to override journal defaults
% to display numbers if necessary
%\preprint{}

%Title of paper
\title{Photon surfaces as pure tension shells: Uniqueness of thin shell wormholes}
%\title{Photon surfaces as pure tension shells and branes: Uniqueness of thin shell wormholes}

% repeat the \author .. \affiliation  etc. as needed
% \email, \thanks, \homepage, \altaffiliation all apply to the current
% author. Explanatory text should go in the []'s, actual e-mail
% address or url should go in the {}'s for \email and \homepage.
% Please use the appropriate macro foreach each type of information

% \affiliation command applies to all authors since the last
% \affiliation command. The \affiliation command should follow the
% other information
% \affiliation can be followed by \email, \homepage, \thanks as well.
\author{Yasutaka Koga}
%\email[]{Your e-mail address}
%\homepage[]{Your web page}
%\thanks{}
%\altaffiliation{}
\affiliation{Department of Physics, Rikkyo University, Toshima, Tokyo 171-8501, Japan}

%Collaboration name if desired (requires use of superscriptaddress
%option in \documentclass). \noaffiliation is required (may also be
%used with the \author command).
%\collaboration can be followed by \email, \homepage, \thanks as well.
%\collaboration{}
%\noaffiliation

\date{\today}

\begin{abstract}
Thin shell wormholes are constructed by joining two asymptotically flat spacetimes along their inner boundaries.
The junction conditions imposed on the spacetimes specify the equation of state of the matter called thin shell distributed along the joined boundaries.
Barcelo and Visser (2000) reported that spherically symmetric thin shell wormholes have their shells, namely the wormhole throats, on the photon spheres if the wormholes are $Z_2$-symmetric across the throats and the shells are of pure tension.
In this paper, first, we consider general {\it joined spacetimes (JSTs)} and show that any {\it $Z_2$-symmetric pure-tensional JST (Z2PTJST)} of $\Lambda$-vacuum has its shell on a photon surface, a generalized object of photon spheres, without assuming any other symmetries.
The class of Z2PTJSTs also includes, for example, brane world models with the shells being the branes we live in.
Second, we investigate the shell stability of Z2PTJSTs by analyzing the stability of the corresponding photon surfaces.
Finally, applying the uniqueness theorem of photon spheres by Cederbaum (2015), we establish the uniqueness theorem of static wormholes of Z2PTJST.
\end{abstract}

% insert suggested PACS numbers in braces on next line
\pacs{04.20.-q, 04.40.Nr, 98.35.Mp}
% insert suggested keywords - APS authors don't need to do this
%\keywords{}

%\maketitle must follow title, authors, abstract, \pacs, and \keywords
\maketitle

% body of paper here - Use proper section commands
% References should be done using the \cite, \ref, and \label commands
%\section{}
% Put \label in argument of \section for cross-referencing
%\section{\label{}}
%\subsection{}
%\subsubsection{}

%%%table of contents
\tableofcontents
\newpage

%%%sec
\section{Introduction}
\label{sec:introduction}
Wormholes are spacetimes having two different asymptotic regions and a throat connecting them.
Their structure enables us to travel to another universe.
Holes connecting two regions of an asymptotic region as a shortcut are also called wormholes.
The wormhole solutions to general relativity (GR) and the modified theories of GR have been provided by the many authors~\cite{kanti,kanti2,harko} (see also~\cite{rogatko} and the citation therein).
%\red{[9-11,12-17 in Rogatko]}.
The uniqueness of wormholes has been proved for Einstein-phantom scalar theory~\cite{yazadjiev,rogatko,rogatko_whmaxwellphantom}.
One of the most important properties of wormholes is that they necessarily violate the energy conditions, which makes it difficult to construct physically reasonable wormhole spacetimes~\cite{morris,hochberg}.
\par
Visser~\cite{visser} proposed a procedure consisting of truncation and gluing of two spacetimes to construct {\it thin shell wormholes}.
By the truncation, inner regions of the two spacetimes are removed and the resulting spacetimes are manifolds with the inner boundaries.
By gluing the two spacetimes along the inner boundaries, we obtain a wormhole spacetime partitioned by the hypersurface, or in other words the throat, into the two regions corresponding to the original two spacetimes.
Through the junction conditions~\cite{israel,textbook:poisson} imposed on the hypersurface, the singularity of the curvature there is interpreted as an infinitesimally thin matter distribution called {\it thin shell}.
\par
Barcelo and Visser~\cite{barcelo} investigated four-dimensional thin shell wormholes consisting of two isometric, static and spherically symmetric spacetimes joined at the same radii and found that the radii of the throats coincide with those of photon spheres, i.e. null circular geodesics.
Subsequently, Kokubu and Harada~\cite{kokubu} extended the analysis to arbitrary dimensions of spacetime and the field equations with the cosmological constant.
From their analysis, we can also find the coincidence of the throats and photon spheres.
The key features of those models are that (1) the wormhole spacetimes are $Z_2$-symmetric across their throats and (2) the shells on the throats have pure tension surface stress energy tensors.
\par
Photon spheres have attracted much attention due to the variety of their applications to astrophysical problems, black hole (BH) shadows~\cite{synge,eht}, quasinormal modes~\cite{cardoso,hod}, and spacetime instability~\cite{cunha,cardoso_psstability}, for example.
Claudel, Virbhadra, and Ellis introduced {\it a photon surface} as the generalization of photon spheres.
It is the geometrical structure which inherits the local properties of photon spheres but does not necessarily have the spherical symmetry.
Photon surfaces have been found in, for example, the accelerated Schwarzschild BH, which is no longer spherically symmetric~\cite{gibbons_2016}.
Uniqueness theorems of spacetimes possessing photon surfaces have been established by the authors~\cite{cederbaum,cederbaum_maxwell,rogatko_psuniqueness,yazadjiev_psuniqueness}.
See also~\cite{koga:psfstability,yoshino_tts,yoshino_dtts,koga3,yazadjiev_psclassification} for investigations concerning photon surfaces.%[directional psf?,nozawa uniqueness?]
\par
In this paper, we see that any joined spacetime which is $Z_2$-symmetric across the shell having pure tension surface stress energy tensor has the coincidence shown in~\cite{barcelo} in $\Lambda$-vacuum.
That is, for any $Z_2$-symmetric spacetime joined by a pure tension shell in $\Lambda$-vacuum, the glued inner boundaries of the two original spacetimes must be photon surfaces.
We call the spacetime $Z_2$-symmetric pure-tensional joined spacetime (Z2PTJST).
Not only the thin shell wormholes but also the brane world models by Randall and Sundrum~\cite{randall,randall2} and baby universes~\cite{visser,barcelo} are in the class of Z2PTJST.
The theorem we prove allows us to construct Z2PTJSTs from any $\Lambda$-vacuum spacetime having a photon surface.
\par
This paper is organized as follows.
In Sec.~\ref{sec:joined-spacetime}, we define a {\it joined spacetime} (JST) and review Israel's junction conditions, the field equations the joined spacetime has to satisfy in addition to Einstein equation.
In Sec.~\ref{sec:z2-symmetry}, we define the $Z_2$-symmetry of the joined spacetime and see that the junction conditions reduce to simple forms.
In Sec.~\ref{sec:psf-wht}, we define a {\it pure-tensional joined spacetime} and establish one of our main theorems, the coincidence between pure-tensional shells and photon surfaces.
In Sec.~\ref{sec:stability}, we analyze the stability of the JST against perturbations of the shell preserving the $Z_2$-symmetry and find the stability also coincides with the stability of the corresponding photon surfaces.
In Sec.~\ref{sec:uniqueness}, we establish the uniqueness theorem of $Z_2$-symmetric pure-tensional wormholes applying the uniqueness theorem of photon spheres by Cederbaum~\cite{cederbaum} to the JST.
Sec.~\ref{sec:conclusion} is devoted to the conclusion.
We suppose that manifolds and fields on them are smooth and the spacetimes are $(d+1)$-dimensional with $d\ge2$ if it is not mentioned particularly.

%%%sec
\section{Joined spacetime}
\label{sec:joined-spacetime}
We consider joining two spacetimes $(M_\pm,g_\pm)$ along their inner boundaries $\Sigma_\pm$.
The resulting spacetime $(\mathcal{M},g,\Sigma)$ with the hypersurface $\Sigma$ corresponding to the joined boundaries $\Sigma_\pm$ is called a joined spacetime.
The procedure for constructing the manifold $\mathcal{M}$ consists of {\it truncation} and {\it gluing} of $M_\pm$.
We also introduce the field equations interpreted as Einstein equation for $(\mathcal{M},g,\Sigma)$.
%We assume $\Sigma_\pm$ is timelike for simplicity.
%%%subsection
\subsection{Truncation and gluing of manifolds}
Let $M_\pm$ be manifolds with hypersurfaces $\Sigma_\pm$ which partition $M_\pm$ into two regions $M_\pm^{\rm ex}$ and $M_\pm^{\rm in}$ and assume $\Sigma_\pm$ are diffeomorphic.
Truncating $M_\pm$ along $\Sigma_\pm$, we obtain the manifolds $\bar{M}_\pm:=M_\pm\setminus M_\pm^{\rm in}$ with the inner boundaries $\Sigma_\pm$.
Gluing $\bar{M}_\pm$ along $\Sigma_\pm$, i.e. identifying $\Sigma_\pm$ by a diffeomorphism $\psi:\Sigma_+\to\Sigma_-$, we construct a new manifold $\mathcal{M}$~\cite{visser}.
$\mathcal{M}$ is a manifold such that a hypersurface $\Sigma$ partitions it into two regions corresponding to $M_\pm^{\rm ex}$.
%We simply denote the regions of $\mathcal{M}$ as $M_\pm$.
\par
The gluing also induces tensor fields on $\mathcal{M}$ from $M_\pm$.
As we see below, we are concerned with {\it a tensor distribution} and {\it jump of tensor fields} on $\mathcal{M}$ across $\Sigma$.
Their values on $\Sigma$ are given by summations of each the tensor fields of $M_\pm$ on $\Sigma_\pm$.
To deal with the summations, we need to specify the diffeomorphism which identifies the tangent bundles $T_{\Sigma_\pm}M_\pm$ of $M_\pm$ on $\Sigma_\pm$.
% to give the tangent bundle $T_\Sigma \mathcal{M}$ on $\Sigma$ of $\mathcal{M}$.
Since $\psi$ induces the diffeomorphism $\psi^*:T\Sigma_+\to T\Sigma_-$ of the tangent bundles $T\Sigma_\pm$ of $\Sigma_\pm$, it is sufficient to specify the diffeomorphism $\psi_N:N\Sigma_+\to N\Sigma_-$ of the normal bundles $N\Sigma_\pm=T_{\Sigma_\pm}M_\pm/T\Sigma_\pm$ of $\Sigma_\pm$.
Given metrics $g_\pm$ on $M_\pm$, it is natural to require
\begin{equation}
\psi_N:N_+\mapsto N_-
\end{equation}
for the unit normal vector fields $N_\pm\in N\Sigma_\pm$ of $\Sigma_\pm$ which are given so that $g_\pm(N_\pm,N_\pm)=1$ and $N_+$ and $N_-$ points inside $M_+^{\rm ex}$ and $M_-^{\rm in}$, respectively.
The requirement is frequently seen in, for example,~\cite{textbook:poisson,barcelo,kokubu}.
\par
Then, in the current paper, we express the gluing by
\begin{equation}
\mathcal{M}=\bar{M}_+\cup_{\psi,\psi_N}\bar{M}_-,
\end{equation}
which is characterized by the diffeomorphisms $\psi$ and $\psi_N$ above.
Note that $\psi$ and $\psi_N$ are dependent.
The projections of $N\Sigma_\pm$ to $\Sigma\pm$ give $\psi$ from $\psi_N$.
Note also that we have denoted $M_\pm^{\rm ex}$ as the regions we keep for convenience.
We can exchange the roles of the exterior regions $M_\pm^{\rm ex}$ and the interior regions $M_\pm^{\rm in}$ freely.
See Fig~\ref{fig:gluing} for the picture of the construction of $\mathcal{M}$.
\begin{figure}[h]
  \begin{minipage}[c]{1.0\linewidth}
    \centering
    \includegraphics[width=200pt]{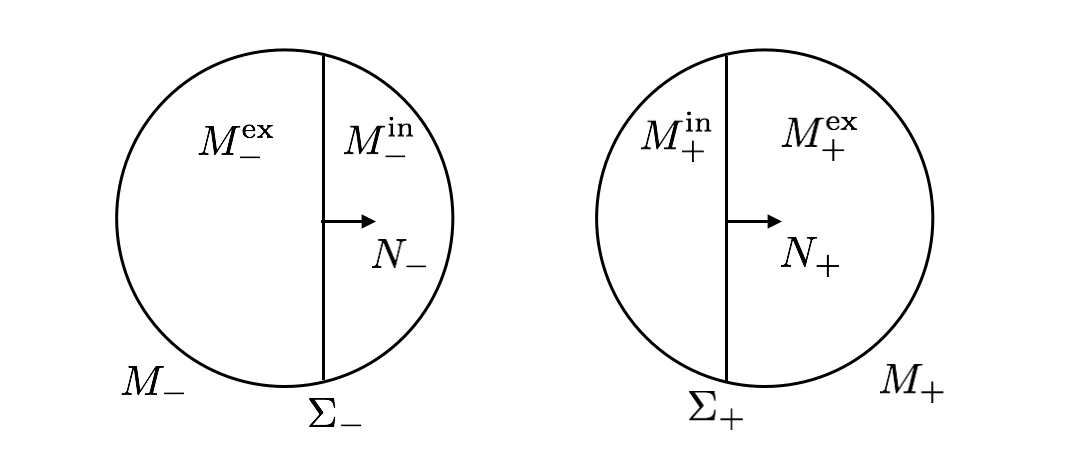}
    \subcaption{Two manifolds $M_\pm$}\label{fig:gluing1}
  \end{minipage}\\
  \begin{minipage}[c]{1.0\linewidth}
    \centering
    \includegraphics[width=200pt]{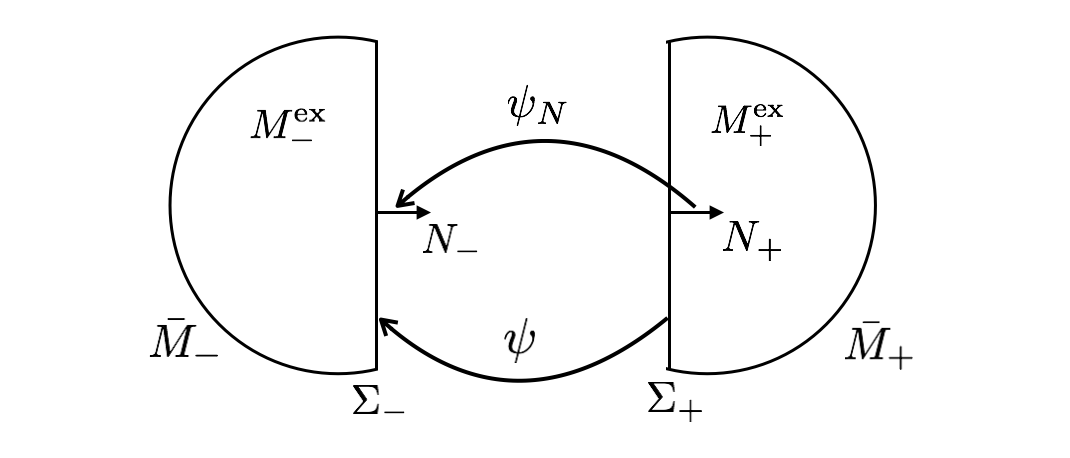}
    \subcaption{Truncation and gluing along $\Sigma_\pm$}\label{fig:gluing2}
  \end{minipage}\\
  \begin{minipage}[c]{1.0\linewidth}
    \centering
    \includegraphics[width=200pt]{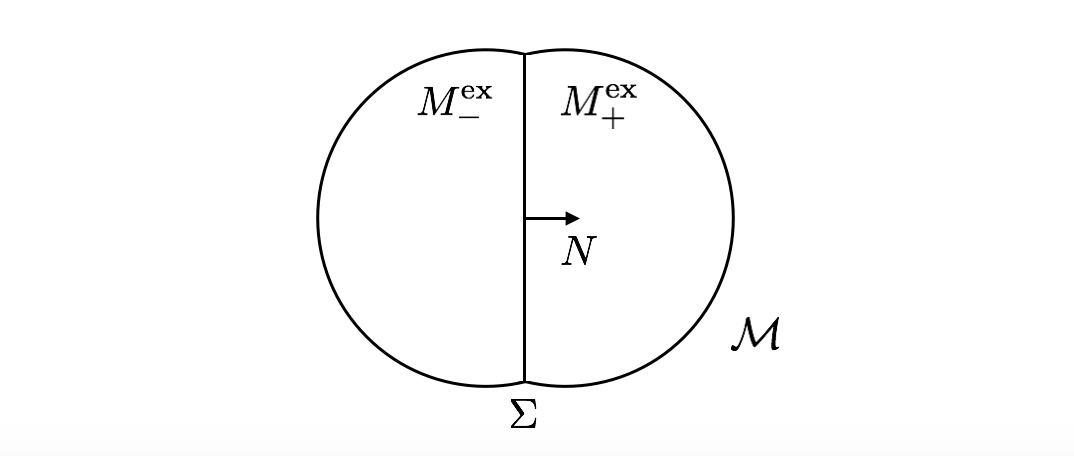}
    \subcaption{Resulting manifold $\mathcal{M}$}\label{fig:gluing3}
  \end{minipage}
  \caption{\subref{fig:gluing1} Two manifolds $M_\pm$ partitioned by the hypersurfaces $\Sigma_\pm$ are \subref{fig:gluing2} truncated and glued along $\Sigma_\pm$ (identified at $\Sigma_\pm$) by $\psi$ and $\psi_N$. \subref{fig:gluing3} The resulting manifold $\mathcal{M}$ possesses the regions $M_\pm^{\rm ex}$ partitioned by $\Sigma$.}\label{fig:gluing}
\end{figure}
%%%subsec
\subsection{Tensor distribution}
Let $l$ be a smooth function in the neighborhood of $\Sigma$ in $\mathcal{M}$ satisfying $l=0$ on $\Sigma$, $l>0$ on $M_+$, and $l<0$ on $M_-$.
{\it The tensor distribution $T$} on $\mathcal{M}$ of tensors $T_\pm$ on $M_\pm$ is defined by
 \begin{equation}
T=\Theta(l)T_++\Theta(-l)T_-
\end{equation}
where $\Theta(l)$ is Heaviside distribution,
\begin{equation}
\Theta(l)=\left\{
	\begin{array}{ll}
	1& (l>0)\\
	1/2& (l=0)\\
	0& (l<0).
	\end{array}
\right.
\end{equation}
The metric $g$ on $\mathcal{M}$ is defined as the distribution,
\begin{equation}
g=\Theta(l)g_++\Theta(-l)g_-.
\end{equation}
%%%subsec
\subsection{Definition}
According to the discussion above, we define {\it the joined spacetime} constructed from the spacetimes $(M_\pm,g_\pm)$ as follows.
\begin{definition}[Joined spacetime (JST)]
\label{definition:jst}
A triple $(\mathcal{M},g,\Sigma)$ of a manifold $\mathcal{M}$, a metric $g$, and a hypersurface $\Sigma$ is called a joined spacetime constructed from $(M_\pm,g_\pm)$ if the hypersurfaces $\Sigma_\pm$ partitioning $(M_\pm,g_\pm)$ into $M_\pm^{\rm in}$ and $M_\pm^{\rm ex}$ are timelike and
\begin{eqnarray}
\label{eq:connected-sum}
&&\mathcal{M}=\bar{M}_+\cup_{\psi,\psi_N}\bar{M}_-,\\
\label{eq:surface-identification}
&&\Sigma=\Sigma_+\equiv\Sigma_-,\\
&&g=\Theta(l)g_++\Theta(-l)g_-,
\end{eqnarray} 
where the diffeomorphisms
\begin{eqnarray}
\label{eq:surface-identification}
&&\psi:\Sigma_+\to\Sigma_-,\\
\label{eq:normal-identification}
&&\psi_N:N_+\mapsto N_-
\end{eqnarray}
are the identifications of $\Sigma_\pm$ and $N\Sigma_\pm$, respectively.
$\Sigma$ is called the {\it shell} of the joined spacetime.
\end{definition}
%%%subsection
\subsection{Einstein equations}
The distribution $g$ of $(\mathcal{M},g,\Sigma)$ may not be smooth across $\Sigma$.
Israel's junction conditions~\cite{israel,textbook:poisson}, which we assume on $g$ on $\Sigma$, are motivated from Einstein equation.
Here we consider the system consisting of Einstein equation and the junction conditions for joined spacetimes.
We call the system {\it Einstein equations}.
\begin{definition}[Einstein equations]
A joined spacetime $(\mathcal{M},g,\Sigma)$ is said to satisfy Einstein equations if $(\bar{M}_\pm,g_\pm)$ satisfy Einstein equation and the first junction condition Eq.~(\ref{eq:first-junction}), the second junction condition Eq.~(\ref{eq:second-junction}), and the equation of motion (EOM) of the shell Eq.~(\ref{eq:eom-brane}) in the following are satisfied on $\Sigma$.
\end{definition}
%%%subsubsection
\subsubsection{First junction condition}
{\it Israel's first junction condition} requires the induced metrics $h_\pm$ of $\Sigma_\pm$ to equal.
It is expressed as,
\begin{equation}
\label{eq:first-junction}
\left[h\right]=0
\end{equation}
where $\left[A\right]$ is the jump of tensor fields $A_\pm$ of $M_\pm$ across $\Sigma$,
\begin{equation}
\left[A\right]:=\left.A_+\right|_{\Sigma_+}-\left.A_-\right|_{\Sigma_-}.
\end{equation}
Note that the condition together with the gluing condition, Eq.~(\ref{eq:normal-identification}), implies
\begin{equation}
\left[g\right]=0.
\end{equation}
Thus, the first junction condition for a joined spacetime $(\mathcal{M},g,\Sigma)$ guarantees the continuity of the metric distribution $g$ across $\Sigma$~\cite{textbook:poisson}.
%%%subsubsec
\subsubsection{Second junction condition}
The distribution $g$ may not be smooth across $\Sigma$ and the curvature, the second derivative of $g$, can be singular there.
The second junction condition is what relates such singularity on $\Sigma$ to the infinitesimally thin matter distribution on the hypersurface.
In the presence of the singular terms in Einstein tensor distribution, Einstein equation leads to
\begin{equation}
\label{eq:second-junction}
-\frac{1}{8\pi}\left(\left[\chi\right]-\left[\theta\right]h\right)=S
\end{equation}
where $\chi_\pm$ is the second fundamental form of $\Sigma_\pm$ in $M_\pm$ with respect to $N_\pm$, $\theta_\pm$ is the trace of $\chi_\pm$, $h$ is the induced metric on $\Sigma$ given by $h(X,Y)=g(X,Y)$ $\forall X,Y\in T\Sigma$, and $S$ is the surface stress energy tensor of the matter on $\Sigma$~\cite{textbook:poisson}.
This is called {\it Israel's second junction condition}.
\par
The corresponding stress energy tensor $T_\Sigma$ of the shell as the matter on $(\mathcal{M},g,\Sigma)$ is given by,
\begin{equation}
T_\Sigma=\delta(l)\Phi^*(S).
\end{equation}
where $\Phi^*:T_p\Sigma\to T_{\Phi(p)}\mathcal{M}$ is the push forward associated with the embedding $\Phi : \Sigma\to \mathcal{M}$ and $\delta(l)$ is the delta function, or Dirac distribution.
As a physical interpretation, $\Sigma$ with $S$ is called {\it a thin shell}.
Eq.~(\ref{eq:second-junction}) is equivalent to the ordinary Einstein equation with the stress energy tensor $T_\Sigma$.
Therefore, the brane world models can be regarded as joined spacetimes~\cite{randall,randall2}.
See~\cite{visser} for the details of the interpretation.
\par
If $S=0$, i.e. $\left[\chi\right]=0$, Christoffel symbols are continuous across $\Sigma$ and Riemann curvature has no singular terms~\cite{textbook:poisson}.
%%%subsubsec
\subsubsection{EOM of the shell}
From Einstein equation, the surface stress energy tensor $S$ of the shell satisfies a conservation law on $\Sigma$ as ordinary matter does in $\mathcal{M}$.
It is given by,
\begin{equation}
\label{eq:eom-brane}
\nabla^h\cdot S+\left[T_N\right]=0,
\end{equation}
where $\nabla^h$ is the covariant derivative associated with $h$, $\nabla^h\cdot S$ expresses ${\nabla^h}^bS_{ab}$, and ${T_N}_\pm$ is given by ${{T_N}_\pm}_a:={T_\pm}_{\mu\nu} e^\mu_aN^\nu_\pm$ with the coordinate basis $\left\{e^\mu_a\right\}$ on $\Sigma$~\cite{textbook:poisson}.

%%%sec
\section{$Z_2$-symmetry of a joined spacetime}
\label{sec:z2-symmetry}
We focus on a $Z_2$-symmetric joined spacetime $(\mathcal{M},g,\Sigma)$ across $\Sigma$.
The $Z_2$-symmetry is also called reflection symmetry across $\Sigma$ in literatures.
%%%subsec
\subsection{Definition}
\begin{definition}[$Z_2$-symmetric joined spacetime (Z2JST)]
\label{definition:z2-symmetry}
A joined spacetime $(\mathcal{M},g,\Sigma)$ constructed from $(M_\pm,g_\pm)$ is said to be $Z_2$-symmetric across $\Sigma$ if there exists an isometry $\phi:(M_+,g_+)\to (M_-,g_-)$ such that
\begin{eqnarray}
\label{eq:z2-diffeo-sigma}
&&\left.\phi\right|_{\Sigma_+}:\Sigma_+ \to \Sigma_-,\\
\label{eq:z2-identification}
&&\left.\phi\right|_{\Sigma_+}=\psi,\\
\label{eq:normal-inversion}
&&\phi^*:N_+\mapsto -N_-,
\end{eqnarray}
where $\left.\phi\right|_{\Sigma_+}$ is the restriction of $\phi$ to $\Sigma_+$, $\psi:\Sigma_+\to\Sigma_-$ is the diffeomorphism in Eq.~(\ref{eq:surface-identification}) in Definition~\ref{definition:jst}, and $\phi^*:TM_+\to TM_-$ is the map induced from $\phi$.
\end{definition}
The condition Eq.~(\ref{eq:normal-inversion}) together with Eq.~(\ref{eq:z2-diffeo-sigma}) implies $\phi:M_+^{\rm ex}\to M_-^{\rm ex}:M_+^{\rm in}\to M_-^{\rm in}$.
The picture of the definition is shown in Fig~\ref{fig:z2}.
For the validity of Definition~\ref{definition:z2-symmetry}, see Appendix~\ref{app:z2-coords}.
\begin{figure}[h]
  \begin{minipage}[c]{1.0\linewidth}
    \centering
    \includegraphics[width=200pt]{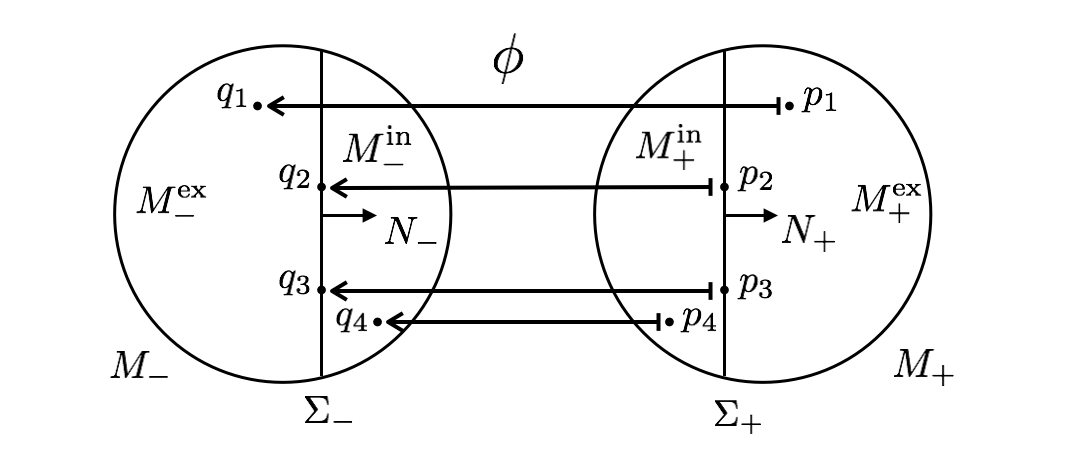}
    \subcaption{Isometry $\phi$}\label{fig:z21}
  \end{minipage}
  \begin{minipage}[c]{1.0\linewidth}
    \centering
    \includegraphics[width=200pt]{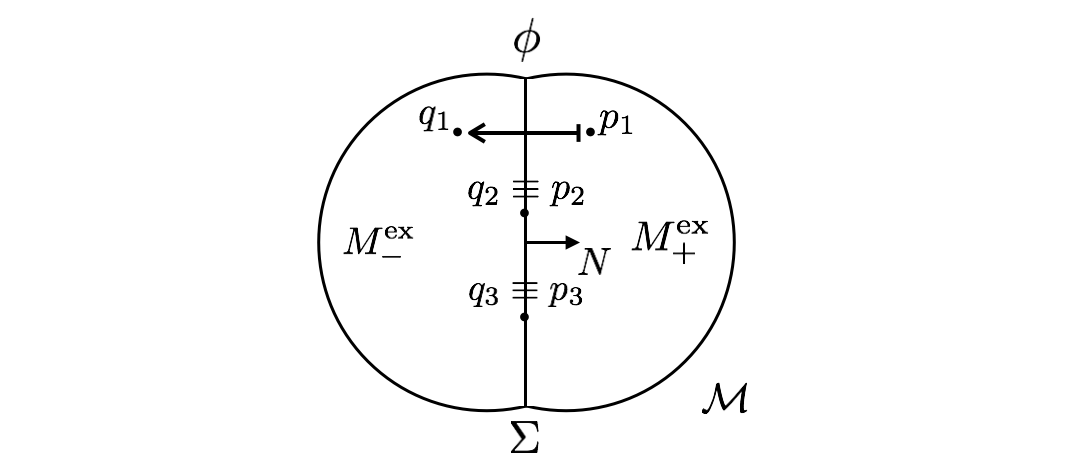}
    \subcaption{$\phi$ acting on $\mathcal{M}$}\label{fig:z22}
  \end{minipage}
  \caption{\subref{fig:z21} Points $p_1,p_2,...\in M_+$ in the regions $M_+^{\rm in}$, $M_+^{\rm ex}$, and $\Sigma_+$ are mapped by the isometry $\phi$ to $q_1,q_2,...\in M_-$ in the corresponding regions $M_-^{\rm in}$, $M_-^{\rm ex}$, and $\Sigma_-$, respectively. \subref{fig:z22} After the truncation and gluing of $M_\pm$, $\phi$ acts on $\mathcal{M}$ as the reflection across $\Sigma$ with the fixed points $p_1\equiv q_1,p_2\equiv q_2\in\Sigma$, which have been identified by $\psi=\phi|_{\Sigma_+}$.}\label{fig:z2}
\end{figure}
%%%subsec
\subsection{Junction conditions under the $Z_2$-symmetry}
The $Z_2$-symmetry of a joined spacetime simplifies the junction conditions.
\begin{proposition}
\label{proposition:z2-junction}
Let $(\mathcal{M},g,\Sigma)$ be a Z2JST.
Then, the first junction condition,
\begin{equation}
\left[h\right]=0,
\end{equation}
is satisfied and the second junction condition reduces to
\begin{equation}
\label{eq:z2-junction}
-\frac{1}{4\pi}\left(\chi-\theta h\right)=S,
\end{equation}
where $\chi:=\chi_+=-\chi_-$ and $\theta$ is the trace of $\chi$.
\end{proposition}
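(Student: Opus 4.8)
The plan is to move all of the intrinsic and extrinsic geometry of $\Sigma_+$ in $(M_+,g_+)$ over to $\Sigma_-$ in $(M_-,g_-)$ using the $Z_2$-isometry $\phi$ of Definition~\ref{definition:z2-symmetry}, and then to substitute the resulting identities into the general junction conditions Eqs.~(\ref{eq:first-junction}) and~(\ref{eq:second-junction}).

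First I would settle the first junction condition. That $\phi$ is an isometry means $g_-(\phi^{*}X,\phi^{*}Y)=g_+(X,Y)$ for all $X,Y\in TM_+$; restricting to $X,Y\in T\Sigma_+$ and using $\left.\phi\right|_{\Sigma_+}=\psi$ (Eq.~(\ref{eq:z2-identification})) gives $h_-(\psi^{*}X,\psi^{*}Y)=h_+(X,Y)$. Under the identification $\Sigma=\Sigma_+\equiv\Sigma_-$ through $\psi$ that enters the definition of a JST, this is exactly $h_+=h_-$, i.e.\ $\left[h\right]=0$; so for a Z2JST the first junction condition holds automatically rather than as an extra hypothesis, and by the remark following Eq.~(\ref{eq:first-junction}) together with Eq.~(\ref{eq:normal-identification}) it also yields $\left[g\right]=0$.

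Next I would establish $\chi_+=-\chi_-$. Being an isometry, $\phi$ intertwines the Levi-Civita connections of $g_\pm$, so $\chi_+(X,Y)=g_+(\nabla^{+}_{X}N_+,Y)$ transforms under $\phi$ into $g_-(\nabla^{-}_{\phi^{*}X}\phi^{*}N_+,\phi^{*}Y)$ for $X,Y\in T\Sigma_+$. By Eq.~(\ref{eq:normal-inversion}), $\phi^{*}N_+=-N_-$, and the second fundamental form is linear, hence odd, in the unit normal, so this equals $-g_-(\nabla^{-}_{\phi^{*}X}N_-,\phi^{*}Y)=-\chi_-(\phi^{*}X,\phi^{*}Y)$. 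Since $\left.\phi^{*}\right|_{\Sigma_+}=\psi^{*}$, the $\Sigma_+\equiv\Sigma_-$ identification turns this into $\chi_+=-\chi_-=:\chi$, and tracing with the now-common induced metric $h$ gives $\theta_+=-\theta_-=:\theta$.

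Finally, inserting $\left[\chi\right]=\chi_+-\chi_-=2\chi$ and $\left[\theta\right]=2\theta$ into Eq.~(\ref{eq:second-junction}) converts $-\frac{1}{8\pi}\left(\left[\chi\right]-\left[\theta\right]h\right)=S$ into $-\frac{1}{4\pi}\left(\chi-\theta h\right)=S$, which is Eq.~(\ref{eq:z2-junction}). I do not expect a genuine difficulty in any single step; the only place where care is really needed---and where a sign could easily be lost---is to keep the two identifications of the normal bundles separate: the gluing map $\psi_N$ sends $N_+\mapsto N_-$ (Eq.~(\ref{eq:normal-identification})), which is the identification used to define the jump $\left[\,\cdot\,\right]$ of normal-dependent data, whereas the symmetry map $\phi^{*}$ sends $N_+\mapsto -N_-$ (Eq.~(\ref{eq:normal-inversion})); it is precisely this mismatch that produces the relative minus sign in $\chi_+=-\chi_-$, and hence the factor $2$ relating Eqs.~(\ref{eq:second-junction}) and~(\ref{eq:z2-junction}).
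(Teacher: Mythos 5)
Your proposal is correct and follows essentially the same route as the paper: use the isometry $\phi$ to identify $h_+$ with $h_-$, use the normal-inversion condition $\phi^*:N_+\mapsto -N_-$ to obtain $\chi_+=-\chi_-$ via the restriction $\left.\phi\right|_{\Sigma_+}=\psi$, and then substitute $[\chi]=2\chi$, $[\theta]=2\theta$ into the second junction condition. Your closing remark correctly isolates the same source of the relative sign that the paper relies on, namely the mismatch between the gluing map $\psi_N:N_+\mapsto N_-$ and the symmetry map $\phi^*:N_+\mapsto -N_-$.
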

\begin{proof}
Let $X,Y\in T\Sigma_\pm$ be arbitrary vectors which are identical under the isometry $\phi$, i.e. the map $\phi^*:TM_+\to TM_-$ induced from $\phi$ maps the vectors as $X\mapsto X$, $Y\mapsto Y$.
Note that the map $\phi^*$ is also regarded as the map of any types of tensors~\cite{textbook:wald}.
The induced metrics $h_\pm(X,Y):=g_\pm(X,Y)$ are mapped as
\begin{equation}
\label{eq:map-h}
h_+(X,Y)=g_+(X,Y)\mapsto g_-(X,Y)=h_-(X,Y)
\end{equation}
by $\phi^*$.
%\footnote{$\phi^*$ is naturally extended to the maps for any types of tensors~\cite{textbook:wald} as far as it is of $C^\infty$.
%Here and in~\cite{textbook:wald}, a diffeomorphism is defined as $C^\infty$.}.
The second fundamental forms $\chi_\pm(X,Y):=(\nabla_\pm n_\pm)(X,Y)$ are mapped as
\begin{equation}
\label{eq:map-chi}
\chi_+(X,Y)=(\nabla_+ n_+)(X,Y)\mapsto (\nabla_-\phi^*(n_+))(X,Y)=-(\nabla_-n_-)(X,Y)=-\chi_-(X,Y).
\end{equation}
by $\phi^*$, where $\nabla_\pm$ are the covariant derivatives associated with $g_\pm$ and $n_\pm=g_\pm(N_\pm,\cdot)$ are the normal $1$-forms dual to $N_\pm$, which are mapped as $\phi^*:n_+=g_+(N_+,\cdot)\mapsto g_-(-N_-,\cdot)=-n_-$ from Eq.~(\ref{eq:normal-inversion}).
Since $h_\pm$ and $\chi_\pm$ are tensors of $\Sigma_\pm$, i.e. tensors taking the values on the space $T\Sigma_\pm\otimes T\Sigma_\pm$, we can regard Eqs.~(\ref{eq:map-h}) and~(\ref{eq:map-chi}) as the mapping induced from $\left.\phi\right|_{\Sigma_+}$ rather than $\phi$ due to Eq.~(\ref{eq:z2-diffeo-sigma}).
That is, the map $(\left.\phi\right|_{\Sigma_+})^*:T\Sigma_+\to T\Sigma_-$ induced from $\left.\phi\right|_{\Sigma_+}$ is a map such that
\begin{eqnarray}
(\left.\phi\right|_{\Sigma_+})^*&:&h_+\mapsto h_-\\
&:&\chi_+\mapsto -\chi_-.
\end{eqnarray}
Therefore, from Eq.~(\ref{eq:z2-identification}), 
%the identification $\psi:\Sigma_+\to \Sigma_-$ of the glued boundaries induces the map $\psi^*:T\Sigma_+\to T\Sigma_-$ such that
we have
\begin{eqnarray}
\psi^*&:&h_+\mapsto h_-\\
&:&\chi_+\mapsto -\chi_-
\end{eqnarray}
for the map $\psi^*:T\Sigma_+\to T\Sigma_-$ induced from $\psi$.
This means $h_+=h_-$ and $\chi_+=-\chi_-$ on $\Sigma$ since $\psi^*$ is the identification.
Then the first junction condition,
\begin{equation}
[h]=h_+-h_-=0,
\end{equation}
is satisfied.
The second fundamental form and its trace satisfy
\begin{eqnarray}
[\chi]&=&\chi_+-\chi_-=2\chi_+,\\ 
\left[\theta\right] &=&\theta_+-\theta_-=2\theta_+.
\end{eqnarray}
Defining $\chi:=\chi_+$ and $\theta:=\mathrm{Tr}(\chi)$, the second junction condition reduces to
\begin{equation}
S=-\frac{1}{8\pi}([\chi]-[\theta]h)=-\frac{1}{4\pi}(\chi-\theta h).
\end{equation}
\end{proof}

%%%sec
\section{Pure-tensional joined spacetime}
\label{sec:psf-wht}
Here, we define {\it a pure-tensional joined spacetime}.
After reviewing photon surfaces, we prove that the shell of the spacetime is a photon surface of the original spacetimes.
%%%subsec
\subsection{Definition}
A thin shell having pure-trace stress energy tensor is called {\it a pure tension shell}.
We consider a joined spacetime with a pure tension shell.
\begin{definition}[Pure-tensional joined spacetime (PTJST)]
\label{definition:ptwh}
Let $(\mathcal{M},g,\Sigma)$ be a joined spacetime constructed from $(M_\pm,g_\pm)$.
Let $S$ be the surface stress energy tensor of $\Sigma$ and $T_\pm$ be the stress energy tensors of $(M_\pm,g_\pm)$.
$(\mathcal{M},g,\Sigma)$ is called a pure-tensional joined spacetime (PTJST) if $(\mathcal{M},g,\Sigma)$ satisfies Einstein equations with the conditions,
\begin{eqnarray}
\label{eq:ptwh-puretension}
&&S=-\epsilon h,\\
\label{eq:ptwh-vacuum}
&&T_\pm=-\frac{\Lambda_\pm}{8\pi} g_\pm
\end{eqnarray}
where $\epsilon:\Sigma\to\mathbb{R}$ is the tension of the shell $\Sigma$, $h$ is the induced metric on $\Sigma$, and $\Lambda_\pm$ are the cosmological constants of $(M_\pm,g_\pm)$.
$\Sigma$ is called {\it the pure tension shell} of $(\mathcal{M},g,\Sigma)$.
\end{definition}
%%%subsec
\subsection{Photon surface}
A photon surface was first introduced by Claudel {\it et al.}~\cite{claudel} as the generalization of a photon sphere of Schwarzschild spacetime.
The photon surface need not have any global symmetries such as stationarity and spherical symmetry.
It inherits only the local properties of the photon sphere.
\begin{definition}[Photon surface]
\label{definition:photon-surface}
A photon surface of $(M, g)$ is an immersed, nowhere-spacelike
hypersurface $S$ of $(M, g)$ such that, for every point $p\in S$ and every null vector $k\in T_pS$, there exists a null geodesic $\gamma : (-\varepsilon,\varepsilon) \to M$ of $(M, g)$ such that $\dot{\gamma}(0) =k, |\gamma|\subset S$.
\end{definition}
Claudel {\it et al.}~\cite{claudel} and Perlick~\cite{perlick} proved the theorem about the equivalent condition for a timelike hypersurface to be a photon surface.
Photon surfaces are characterized by the second fundamental forms.
\begin{theorem}[Claudel {\it et al.} (2001), Perlick (2005)]
\label{theorem:equivalent-psf}
Let $S$ be a timelike hypersurface of spacetime $(M,g)$ with $d+1:=\dim M\ge3$.
Let $h_{ab}$, $\chi_{ab}$, and $\theta$ be the induced metric, the second fundamental form, the trace of $\chi_{ab}$, respectively.
Then $S$ is a photon surface if and only if it is totally umbilic, i.e.
\begin{equation}
\label{eq:umbilic}
\chi_{ab}=\frac{\theta}{d}h_{ab}\;\;\; \forall p\in S.
\end{equation}
\end{theorem}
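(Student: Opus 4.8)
The plan is to deduce the equivalence from the Gauss formula together with one elementary fact about Lorentzian inner products. Since $S$ is timelike, its unit normal $N$ is spacelike ($g(N,N)=1$), the induced metric $h$ is Lorentzian with $\dim T_pS=d\ge2$ and a nondegenerate null cone, and for vector fields $X,Y$ tangent to $S$ the ambient and intrinsic connections are related by $\nabla_XY=\nabla^hX Y-\chi(X,Y)\,N$, with $\chi(X,Y)=-g(N,\nabla_XY)$ consistent with the convention of Sec.~\ref{sec:joined-spacetime}. (Here $\nabla$ and $\nabla^h$ denote the covariant derivatives of $g$ and of $h$.) Thus for any curve $\sigma$ in $S$, $\nabla_{\dot\sigma}\dot\sigma=\nabla^h_{\dot\sigma}\dot\sigma-\chi(\dot\sigma,\dot\sigma)\,N$, and the tangential and normal parts of a vanishing ambient acceleration must vanish separately.

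\emph{Umbilic $\Rightarrow$ photon surface.} Fix $p\in S$ and a null $k\in T_pS$, and let $\sigma$ be the geodesic of $(S,h)$ with $\sigma(0)=p$, $\dot\sigma(0)=k$. Since $\sigma$ is an $h$-geodesic, $h(\dot\sigma,\dot\sigma)$ is constant along $\sigma$ and equals $h(k,k)=0$, so $\dot\sigma$ is everywhere null; as $\dot\sigma\in TS$, it is then $g$-null. Using $\nabla^h_{\dot\sigma}\dot\sigma=0$ and the umbilicity~(\ref{eq:umbilic}), the Gauss formula gives $\nabla_{\dot\sigma}\dot\sigma=-\frac{\theta}{d}\,h(\dot\sigma,\dot\sigma)\,N=0$. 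Hence $\sigma$ is itself a null geodesic of $(M,g)$ with $|\sigma|\subset S$ and $\dot\sigma(0)=k$, so it serves as the geodesic $\gamma$ required by Definition~\ref{definition:photon-surface}. Since $p$ and $k$ were arbitrary, $S$ is a photon surface.

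\emph{Photon surface $\Rightarrow$ umbilic.} Fix $p\in S$ and null $k\in T_pS$, and take the null geodesic $\gamma$ provided by Definition~\ref{definition:photon-surface}: $|\gamma|\subset S$, $\dot\gamma(0)=k$, $\dot\gamma\in TS$. Since $\nabla_{\dot\gamma}\dot\gamma=0$, the normal component of the Gauss formula forces $\chi(\dot\gamma,\dot\gamma)=0$ along $\gamma$, so at the base point $\chi_p(k,k)=0$. Hence $\chi_p$ is a symmetric bilinear form on the Lorentzian space $(T_pS,h_p)$ vanishing on the entire null cone. I then invoke the linear-algebra lemma: in an $h_p$-orthonormal basis $e_0,e_1,\dots,e_{d-1}$ (with $e_0$ timelike), testing $\chi_p$ on the null vectors $e_0\pm e_i$ yields $\chi_p(e_0,e_i)=0$ and $\chi_p(e_i,e_i)=-\chi_p(e_0,e_0)=:\lambda$, and for $i\ne j$ testing on the null vector $e_0+\frac{1}{\sqrt2}(e_i+e_j)$ yields $\chi_p(e_i,e_j)=0$; therefore $\chi_p=\lambda\,h_p$. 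Taking the $h_p$-trace gives $\theta=\lambda d$, i.e.\ $\chi_p=\frac{\theta}{d}\,h_p$. Since $p\in S$ is arbitrary, $S$ is totally umbilic; the argument already works for $d=2$, where only $e_0,e_1$ appear, matching the hypothesis $\dim M\ge3$.

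The substantive step — the place to be careful — is the linear-algebra lemma in the second direction: everything else is routine bookkeeping with the Gauss formula, but one must genuinely exploit that the null cone of a Lorentzian inner product contains enough vectors to pin $\chi_p$ down to a multiple of $h_p$, and one should check this does not tacitly require $d\ge3$. It is also worth noting at the outset why $S$ must be taken timelike: this is what guarantees that $N$ is spacelike (so the Gauss formula has the stated form), that $T_pS$ is nondegenerate, and that $T_pS$ contains null vectors at all, so that Definition~\ref{definition:photon-surface} and the umbilicity condition are compared on the same footing.
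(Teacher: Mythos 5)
Your proof is correct. The paper itself does not prove this theorem --- it is quoted from Claudel \emph{et al.} and Perlick --- so there is no internal argument to compare against; your route (Gauss formula in both directions, plus the linear-algebra lemma that a symmetric bilinear form on a Lorentzian space vanishing on the whole null cone must be a multiple of the metric) is essentially the standard proof given in those references. The two points you flag as the ones to be careful about are indeed the right ones, and both check out: the null vectors $e_0\pm e_i$ and $e_0+\tfrac{1}{\sqrt2}(e_i+e_j)$ do pin $\chi_p$ down to $\lambda h_p$, and the argument degenerates gracefully to $d=2$ where only the $e_0\pm e_1$ tests are needed, matching the hypothesis $\dim M\ge 3$.
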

\par
The following proposition is necessary for the proof of our theorem.
\begin{proposition}[Photon surfaces in $\Lambda$-vacuum]
\label{proposition:psf-in-vacuum}
Let $(M,g)$ be a $\Lambda$-vacuum spacetime with $\dim M\ge3$.
Then, a timelike photon surface $S$ of $(M,g)$ is constant mean curvature (CMC).
\end{proposition}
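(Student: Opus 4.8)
The plan is to combine total umbilicity with the contracted Codazzi equation. By Theorem~\ref{theorem:equivalent-psf} the timelike photon surface $S$ is totally umbilic, $\chi_{ab}=\frac{\theta}{d}h_{ab}$, and since $\nabla^h h=0$ this gives $(\nabla^h_a\chi)_{bc}=\frac{1}{d}(\nabla^h_a\theta)\,h_{bc}$. Substituting into the Codazzi equation of $S$ in $(M,g)$,
\begin{equation}
(\nabla^h_a\chi)_{bc}-(\nabla^h_b\chi)_{ac}=R_{\mu\nu\rho\sigma}\,e^\mu_a e^\nu_b e^\rho_c N^\sigma ,
\end{equation}
the left-hand side becomes $\frac{1}{d}\big((\nabla^h_a\theta)\,h_{bc}-(\nabla^h_b\theta)\,h_{ac}\big)$.

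Next I would trace over $a$ and $c$ with the inverse induced metric $h^{ac}$. The left-hand side collapses to $\big(\tfrac{1}{d}-1\big)\nabla^h_b\theta=-\tfrac{d-1}{d}\nabla^h_b\theta$. On the right-hand side, since $N$ is a spacelike unit normal of the timelike hypersurface $S$ (so $g(N,N)=1$ and $h^{ac}e^\mu_a e^\rho_c=g^{\mu\rho}-N^\mu N^\rho$), the curvature term reduces to $R_{\nu\sigma}\,e^\nu_b N^\sigma-R_{\mu\nu\rho\sigma}\,N^\mu e^\nu_b N^\rho N^\sigma$, and the last term vanishes by the antisymmetry of the Riemann tensor in its final pair of indices. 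Hence $-\tfrac{d-1}{d}\nabla^h_b\theta=R_{\nu\sigma}\,e^\nu_b N^\sigma$.

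Finally I would invoke the $\Lambda$-vacuum condition $T_\pm=-\frac{\Lambda_\pm}{8\pi}g_\pm$: via Einstein equation it forces the Ricci tensor of $(M,g)$ to be proportional to $g$, $R_{\mu\nu}=\lambda\,g_{\mu\nu}$ for a constant $\lambda$, so that $R_{\nu\sigma}\,e^\nu_b N^\sigma=\lambda\,g(e_b,N)=0$ because $e_b\in TS$ is $g$-orthogonal to $N$. Since $\dim M\ge3$, i.e.\ $d\ge2$, the factor $\tfrac{d-1}{d}$ is nonzero, so $\nabla^h\theta=0$; thus $\theta$ is locally constant, hence constant on a connected photon surface, i.e.\ $S$ is CMC. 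The one place demanding care is the bookkeeping in the Codazzi equation for a timelike hypersurface --- the sign $g(N,N)=+1$, the slot of the normal vector in the curvature term, and the cancellation of the purely normal Riemann contraction; the remaining manipulations are routine contractions.
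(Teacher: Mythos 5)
Your proposal is correct and follows essentially the same route as the paper: apply the Codazzi--Mainardi equation, substitute the total umbilicity condition from Theorem~\ref{theorem:equivalent-psf}, trace with the inverse induced metric, and use the Einstein ($\Lambda$-vacuum) condition to annihilate the mixed tangential--normal Ricci contraction, giving $\nabla^h\theta=0$. The sign and normal-vector bookkeeping you flag works out consistently with the paper's computation, so no further changes are needed.
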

\begin{proof}
Let $n^\mu$, $h_{\mu\nu}=g_{\mu\nu}-n_\mu n_\nu$, and $\chi_{\mu\nu}$ be the unit normal vector, the induced metric, and the second fundamental form of $S$, respectively.
From the Codazzi-Mainardi equation, we have
\begin{eqnarray}
R_{\mu\alpha\nu\beta}h^\mu_\rho h^\alpha_\gamma h^\nu_\sigma n^\beta
&=&\nabla^h_\rho\chi_{\gamma\sigma}-\nabla^h_\gamma\chi_{\rho\sigma}\nonumber\\
%&=&\nabla^h_a\left(\frac{\theta}{d}h_{bc}\right)-\nabla^h_b\left(\frac{\theta}{d}h_{ac}\right)\nonumber\\
&=&\frac{1}{d}\left[\left(\nabla^h_\rho\theta\right) h_{\gamma\sigma}-\left(\nabla^h_\gamma\theta\right) h_{\rho\sigma}\right]
\end{eqnarray}
where $d$ is the dimension of $S$, $\theta=h^{\rho\gamma}\chi_{\rho\gamma}$ is the mean curvature, and $\nabla^h_\rho$ is the covariant derivative on $S$ associated with $h_{\rho\gamma}$.
We have used Theorem~\ref{theorem:equivalent-psf} in the last equality.
Contracting with $h^{\gamma\sigma}$, the equation reduces to
\begin{equation}
h^{\gamma\sigma}R_{\mu\alpha\nu\beta}h^\mu_\rho h^\alpha_\gamma h^\nu_\sigma n^\beta
=-R_{\mu \beta}h^\mu_\rho n^\beta
=\frac{d-1}{d}\nabla^h_\rho\theta.
\end{equation}
From that $(M,g)$ is $\Lambda$-vacuum, i.e. $R_{\mu\nu}=[2/(d-1)]\Lambda g_{\mu\nu}$ for some constant $\Lambda$, we have
\begin{equation}
\nabla^h_\rho\theta=0.
\end{equation}
Therefore, $\theta=const.$ along $S$ and $S$ is CMC.
\end{proof}
See also Proposition~3.3 in~\cite{cederbaum} for general totally umbilic hypersurfaces.
%%%subsec
\subsection{Photon surfaces as pure tension shells}
The following theorem states about the coincidence of pure-tensional shells of joined spacetimes and photon surfaces.
\begin{theorem}[Photon surface as a pure tension shell]
\label{theorem:psf-wh}
Let $(\mathcal{M},g,\Sigma)$ be a Z2JST constructed from $(M_\pm,g_\pm)$ with $d+1:=\dim \mathcal{M}=\dim M_\pm\ge3$.
Then $(\mathcal{M},g,\Sigma)$ is a PTJST if and only if $(\bar{M}_\pm,g_\pm)$ are $\Lambda$-vacuum and $\Sigma_\pm$ are photon surfaces.
The tension $\epsilon$ and the mean curvature $\theta$ of the shell $\Sigma$ are constant and given by the relation $\theta=\pm\theta_\pm=-4\pi[d/(d-1)] \epsilon$.
\end{theorem}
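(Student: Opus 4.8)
The plan is to reduce the theorem to the three results already established in this section: the simplified junction conditions for a Z2JST (Proposition~\ref{proposition:z2-junction}), the umbilicity characterization of photon surfaces (Theorem~\ref{theorem:equivalent-psf}), and the CMC property of photon surfaces in $\Lambda$-vacuum (Proposition~\ref{proposition:psf-in-vacuum}). Since $(\mathcal{M},g,\Sigma)$ is by hypothesis a Z2JST, Proposition~\ref{proposition:z2-junction} already gives us that the first junction condition holds automatically and that the second junction condition is equivalent to Eq.~(\ref{eq:z2-junction}), $S=-\frac{1}{4\pi}(\chi-\theta h)$, with $\chi:=\chi_+=-\chi_-$ and $\theta:=\mathrm{Tr}\,\chi=\theta_+=-\theta_-$. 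The whole statement then becomes an elementary algebraic equivalence between the pure-tension form of $S$ and the total umbilicity of $\chi$, together with a verification that the last of Einstein equations, the shell EOM Eq.~(\ref{eq:eom-brane}), is consistent.

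For the forward direction I would start from Definition~\ref{definition:ptwh}: condition Eq.~(\ref{eq:ptwh-vacuum}) is precisely the statement that $(\bar M_\pm,g_\pm)$ are $\Lambda$-vacuum, and Eq.~(\ref{eq:ptwh-puretension}) says $S=-\epsilon h$. Substituting $S=-\epsilon h$ into Eq.~(\ref{eq:z2-junction}) gives $\chi=(\theta+4\pi\epsilon)h$; tracing over the $d$-dimensional $\Sigma$ yields $\theta=d(\theta+4\pi\epsilon)$, hence $\theta=-4\pi\frac{d}{d-1}\epsilon$ and $\theta+4\pi\epsilon=\theta/d$, so $\chi=\frac{\theta}{d}h$. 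This is exactly the umbilicity condition Eq.~(\ref{eq:umbilic}) for $\Sigma_+$, so by Theorem~\ref{theorem:equivalent-psf} $\Sigma_+$ is a photon surface; using $h_-=h_+$ and $\chi_-=-\chi_+=\frac{\theta_-}{d}h_-$ with $\theta_-=-\theta$, the hypersurface $\Sigma_-$ is umbilic as well, hence also a photon surface. Finally, since $(\bar M_\pm,g_\pm)$ are $\Lambda$-vacuum, Proposition~\ref{proposition:psf-in-vacuum} forces $\theta_\pm$, and therefore $\theta$ and $\epsilon$, to be constant, which gives the claimed relation $\theta=\pm\theta_\pm=-4\pi[d/(d-1)]\epsilon$.

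For the converse I would run the same computation backwards. Assuming $(\bar M_\pm,g_\pm)$ are $\Lambda$-vacuum, Eq.~(\ref{eq:ptwh-vacuum}) holds and $(\bar M_\pm,g_\pm)$ satisfy Einstein's equation; assuming $\Sigma_\pm$ are photon surfaces, Theorem~\ref{theorem:equivalent-psf} gives $\chi=\frac{\theta}{d}h$ and Proposition~\ref{proposition:psf-in-vacuum} gives $\theta=\mathrm{const}$. Defining $S$ through Eq.~(\ref{eq:z2-junction}) as dictated by Proposition~\ref{proposition:z2-junction}, one gets $S=-\frac{1}{4\pi}\big(\tfrac{\theta}{d}-\theta\big)h=\frac{d-1}{4\pi d}\theta\,h=-\epsilon h$ with the constant $\epsilon:=-\frac{d-1}{4\pi d}\theta$, i.e.\ Eq.~(\ref{eq:ptwh-puretension}) with $\theta=-4\pi[d/(d-1)]\epsilon$. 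It then remains to check the shell EOM Eq.~(\ref{eq:eom-brane}): since ${T_\pm}_{\mu\nu}\propto g_{\mu\nu}$ and $N_\pm$ is orthogonal to $T\Sigma$ we have ${T_N}_\pm=0$, hence $[T_N]=0$, while $\nabla^h\cdot S=-\nabla^h\epsilon=0$ because $\epsilon$ is constant; so Eq.~(\ref{eq:eom-brane}) is satisfied and all the conditions of Definition~\ref{definition:ptwh} hold, making $(\mathcal{M},g,\Sigma)$ a PTJST.

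I do not expect a deep obstacle: the three quoted results were arranged precisely so that the argument is a short line of algebra plus the sign bookkeeping $\chi_\pm=\pm\chi$. The points that need a little care are (i) confirming that imposing pure tension does not over-determine the system, i.e.\ that the constancy of $\epsilon$ demanded by $\nabla^h\cdot S+[T_N]=0$ is exactly the constancy already guaranteed by Proposition~\ref{proposition:psf-in-vacuum}, so Eq.~(\ref{eq:eom-brane}) adds nothing new; and (ii) tracking the orientation conventions fixed in Definitions~\ref{definition:jst} and~\ref{definition:z2-symmetry} so that $\theta=+\theta_+=-\theta_-$ and the coefficient in $\theta=-4\pi[d/(d-1)]\epsilon$ come out with the right signs.
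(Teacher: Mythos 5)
Your proposal is correct and follows essentially the same route as the paper: reduce the junction conditions via Proposition~\ref{proposition:z2-junction}, convert between the pure-tension form of $S$ and total umbilicity via Theorem~\ref{theorem:equivalent-psf}, invoke Proposition~\ref{proposition:psf-in-vacuum} for constancy of $\theta$ (hence $\epsilon$), and verify the shell EOM using $[T_N]=0$ and $\nabla^h\epsilon=0$. The only difference is cosmetic — you spell out the trace computation $\theta=d(\theta+4\pi\epsilon)$ where the paper simply notes $\chi\propto h$ — and your sign bookkeeping matches the paper's $\epsilon=-\frac{1}{4\pi}\frac{d-1}{d}\theta$.
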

\begin{proof}
From the $Z_2$-symmetry of $(\mathcal{M},g,\Sigma)$ and Proposition~\ref{proposition:z2-junction}, the first junction condition is automatically satisfied and the second junction condition reduces to Eq.~(\ref{eq:z2-junction}) where $\chi:=\chi_+=-\chi_-$ and $\theta:=\mathrm{Tr}(\chi)=\theta_+=-\theta_-$.
Then the Einstein equations for $(\mathcal{M},g,\Sigma)$ consist of the reduced second junction condition, Eq.~(\ref{eq:z2-junction}), the EOM of the shell $\Sigma$, Eq.~(\ref{eq:eom-brane}), and Einstein equation on $(\bar{M}_\pm,g_\pm)$.
The induced metric on $\Sigma$ is given by $h=h_+=h_-$ due to the fact that the first junction condition is satisfied.
In fact, $h(X,Y):=g(X,Y)=\frac{1}{2}(g_+(X,Y)+g_-(X,Y))=\frac{1}{2}(h_+(X,Y)+h_-(X,Y))=h_+(X,Y)=h_-(X,Y)$ $\forall X,Y\in T\Sigma\equiv T\Sigma_\pm$.
\par
We prove the ``if" part.
From that $(\bar{M}_\pm,g_\pm)$ are isometric and $\Lambda$-vacuum, the energy momentum tensors satisfy $T_\pm=-(\Lambda/8\pi)g_\pm$ for the common cosmological constant $\Lambda$ and Eq.~(\ref{eq:ptwh-vacuum}) in 
Definition~\ref{definition:ptwh} is satisfied.
From Theorem~\ref{theorem:equivalent-psf}, the photon surfaces $\Sigma_\pm$ give the conditions
\begin{equation}
\chi=\pm\chi_\pm=\frac{\pm\theta_\pm}{d}h_\pm=\frac{\theta}{d}h.
\end{equation}
From Proposition~\ref{proposition:psf-in-vacuum}, we have
\begin{equation}
\label{eq:theta-equals-const}
\theta=\pm\theta_\pm=const.
\end{equation}
The second junction condition Eq.~(\ref{eq:z2-junction}) then reduces to
\begin{equation}
\frac{1}{4\pi}\frac{d-1}{d}\theta h=S.
\end{equation}
Letting $\epsilon$ be a function on $\Sigma$ given by
\begin{equation}
\label{eq:pressure-mean-curvature}
\epsilon=-\frac{1}{4\pi}\frac{d-1}{d}\theta,
\end{equation}
the surface stress energy tensor becomes
\begin{equation}
S=-\epsilon h
\end{equation}
and Eq.~(\ref{eq:ptwh-puretension}) in Definition~\ref{definition:ptwh} is satisfied.
From Eqs.~(\ref{eq:theta-equals-const}) and~(\ref{eq:pressure-mean-curvature}), we have $\nabla^h \epsilon=0$ implying
\begin{equation}
\nabla^h\cdot S=0.
\end{equation}
Therefore, from the fact that $T_\pm=-(\Lambda/8\pi)g_\pm$, the EOM of the shell $\Sigma$, Eq.~(\ref{eq:eom-brane}), 
\begin{equation}
\nabla^h\cdot S+\left[T_N\right]=0,
%=\nabla^h\cdot S=0,
\end{equation}
is satisfied.
$(\mathcal{M},g,\Sigma)$ is a joined spacetime satisfying Definition~\ref{definition:ptwh} of PTJST.
\par
We prove the ``only if" part.
From Definition~\ref{definition:ptwh}, the PTJST $(\mathcal{M},g,\Sigma)$ satisfies $S=-\epsilon h$ and $T_\pm=-(\Lambda/8\pi)g_\pm$.
Then $(\bar{M}_\pm,g_\pm)$ are $\Lambda$-vacuum and the second junction condition, Eq.(\ref{eq:z2-junction}), under $Z_2$-symmetry requires that
\begin{equation}
\chi=\chi_+=-\chi_-\propto h,
\end{equation}
i.e. $\Sigma_\pm$ are timelike totally umbilic hypersurfaces.
From Theorem~\ref{theorem:equivalent-psf}, $\Sigma_\pm$ are photon surfaces of $(\bar{M}_\pm,g_\pm)$.
\par
From Eqs.~(\ref{eq:theta-equals-const}) and~(\ref{eq:pressure-mean-curvature}), we finally obtain
\begin{equation}
\theta=\pm\theta=-4\pi\frac{d}{d-1}\epsilon=const.
\end{equation}
for the PTJST.
\end{proof}
Theorem~\ref{theorem:psf-wh} applies to the static and dynamical cases of the wormholes investigated in~\cite{barcelo,kokubu}.
From the viewpoint of symmetries of the photon surfaces, the throats of the static cases correspond to $\mathbb{R}\times SO(d)$, $\mathbb{R}\times E(d-1)$, and $\mathbb{R}\times SO(1,d-1)$-invariant photon surfaces while the dynamical cases correspond to $SO(d-1)$, $E(d)$, and $SO(d-2,1)$-invariant photon surfaces in the $\mathbb{R}\times SO(d)$, $\mathbb{R}\times E(d-1)$, and $\mathbb{R}\times SO(d-1,1)$-invariant spacetimes, respectively.
We can see that the spherical and nonspherical throats are photon surfaces from~\cite{koga3}.
\par
The theorem states that any $\Lambda$-vacuum spacetimes with photon surfaces found in literatures are joined to give Z2PTJST.
For example, we can join two Minkowski spacetimes nontrivially to give a Z2PTJST by gluing them at the timelike photon surfaces shown in Examples~1--3 in~\cite{claudel}.
The accelerated black holes also have photon surfaces, which correspond to the photon spheres in zero acceleration~\cite{gibbons_2016} and thus, we can construct ``accelerated wormholes" and ``accelerated baby universes" from the spacetimes.
For less symmetric $\Lambda$-vacuum spacetimes possessing photon surfaces, see~\cite{koga:nonsym-psf}.
\par
We can also confirm that the positive and negative branes in~\cite{randall,randall2} are indeed located on the four-dimensional timelike photon surfaces of the five-dimensional bulk spacetime from the facts that the bulk spacetime is conformally transformed Minkowski spacetime and photon surfaces are invariant manifolds under conformal transformations~\cite{claudel}.

%%%sec
\section{Stability of pure-tensional joined spacetime}
\label{sec:stability}
We consider a shell perturbation of Z2JST preserving its $Z_2$-symmetry.
That is, perturbing the hypersurfaces $\Sigma_\pm$ of $(M_\pm,g_\pm)$ to $\widetilde{\Sigma}_\pm$, we rejoin the spacetimes along the new boundaries $\widetilde{\Sigma}_\pm$ to give a new perturbed Z2JST $(\widetilde{\mathcal{M}},g,\widetilde{\Sigma})$ (Fig.~\ref{fig:perturbation}).
Since $(M_\pm,g_\pm)$ are isometric to each other including their hypersurfaces $\Sigma_\pm$ and $\widetilde{\Sigma}_\pm$, it is sufficient to focus only on the ``plus" ones and we denote them as $(M,g)$, $\Sigma$, $\widetilde{\Sigma}$, and so on in the following.
\par
After reviewing the deformation formalism of surfaces given by Capovilla and Guven~\cite{capovilla}, we apply it to Z2PTJSTs, that is, the case where $\Sigma$ is a photon surface.
We also see that the stability of $\Sigma$ corresponds to {\it the stability of null geodesics on $\Sigma$} defined in~\cite{koga:psfstability}.
\begin{figure}[h]
  \begin{minipage}[c]{1.0\linewidth}
    \centering
    \includegraphics[width=200pt]{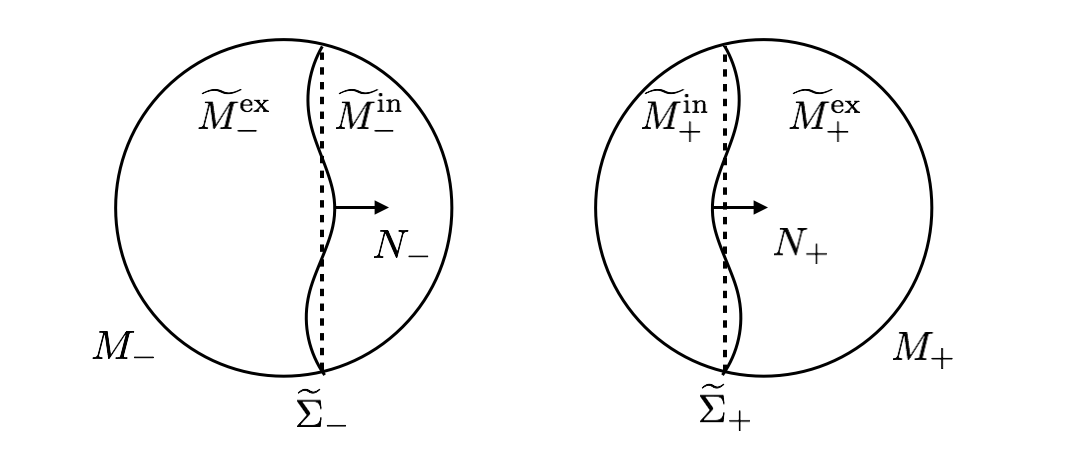}
    \subcaption{Perturbed inner boundaries $\widetilde{\Sigma}_\pm$}\label{fig:perturbation1}
  \end{minipage}\\
  \begin{minipage}[c]{1.0\linewidth}
    \centering
    \includegraphics[width=200pt]{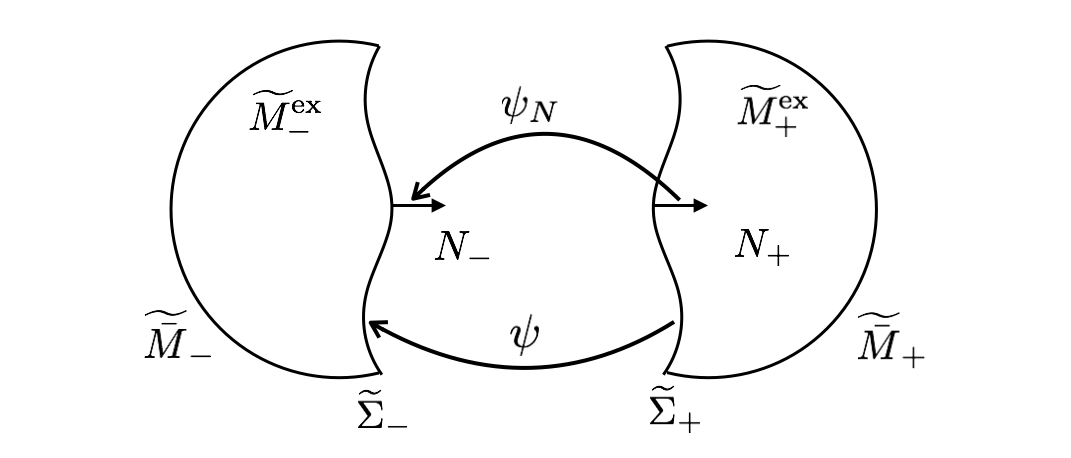}
    \subcaption{Rejoining along $\widetilde{\Sigma}_\pm$}\label{fig:perturbation2}
  \end{minipage}
  \caption{\subref{fig:perturbation1} The unperturbed inner boundaries $\Sigma_\pm$ (dashed lines) are perturbed to $\widetilde{\Sigma}_\pm$ (solid lines). \subref{fig:perturbation2} The truncation and gluing of $M_\pm$ along $\widetilde{\Sigma}_\pm$ gives the perturbed joined manifold $\widetilde{\mathcal{M}}$ with the perturbed shell $\widetilde{\Sigma}:=\widetilde{\Sigma}_+\equiv\widetilde{\Sigma}_-$.}\label{fig:perturbation}
\end{figure}
%%%subsec
\subsection{Deformation of a hypersurface}
\label{sec:defo-hypersurface}
Let $\Sigma$ be a timelike hypersurface of a spacetime $(M,g)$.
Consider a one-parameter family of the deformation,
\begin{equation}
\mathcal{F}_\varepsilon\ :\ \Sigma\to\Sigma_\varepsilon, 
\end{equation}
with the parameter $\varepsilon$ where $\Sigma_0=\Sigma$.
Expanding in $\varepsilon$, points on the surfaces are expressed as
\begin{equation}
\label{eq:expansion-epsilon}
\Sigma_\varepsilon^\mu=\Sigma^\mu+\varepsilon X^\mu+\mathcal{O}\left(\varepsilon^2\right)
\end{equation}
in a coordinate system $\left\{x^\mu\right\}$.
The vector field $X$ on $\Sigma$ is called {\it the deviation} between $\Sigma$ and its infinitesimal deformation $\lim_{\varepsilon\to 0}\Sigma_\varepsilon$.
Without loss of generality, we assume $X$ to be orthogonal to $\Sigma$ by diffeomorphisms on $\Sigma$ and $\Sigma_\varepsilon$~\cite{capovilla}.
%\footnote{The procedure which makes $X$ to be orthogonal corresponds to diffeomorphism on $\Sigma_\varepsilon$. See~\cite{capovilla}.}
Define a scalar field $\Phi$ on $\Sigma$ by
\begin{equation}
\label{eq:definition-Phi}
X=\Phi N
\end{equation}
where $N$ is the unit normal vector of $\Sigma$.
This quantity represents the {\it distance} between $\Sigma$ and its infinitesimal deformation.
According to the deformation formalism by Capovilla and Guven~\cite{capovilla}, we have equations which relate $\Phi$ with the geometrical quantities on $\Sigma$.
A tiny fraction of the calculation processes in~\cite{capovilla} is incorrect and we recalculate it for our purpose following their procedure in Appendix~\ref{app:capovilla-recalculation}.
\par
The deformation of the intrinsic geometry of $\Sigma$ is related to $\Phi$ by
\begin{equation}
\label{eq:ricci-derivative}
\left(\nabla_{N}\mathcal{R}\right)\Phi=-2\mathcal{R}_{ab}\chi^{ab}\Phi+2\nabla^h_a\nabla^h_b\left(\chi^{ab}\Phi\right)-2\Delta^h\left(\theta\Phi\right)
\end{equation}
where $\Delta^h:=h^{ab}\nabla^h_a\nabla^h_b$ and $\mathcal{R}_{ab}$ and $\mathcal{R}$ are the Ricci tensor and scalar of $\Sigma_\varepsilon$, respectively.
This is the codimension one version of Eq.~(\ref{eq:deformation-ricci-scalar}) in Appendix~\ref{app:capovilla-recalculation}.
The indices $a,b,...$ are with respect to the coordinate basis vectors $\left\{e_a\right\}$ of $\Sigma$.
For example, a quantity $A_{ab}$ is the scalar resulting from the contraction $A(e_a,e_b)$ of a tensor $A$ acting on $T\Sigma\otimes T\Sigma$.
We  lower and raise the indices by $h_{ab}=h(e_a,e_b)$ and its inverse matrix $h^{ab}$, respectively.
For the deformation of the extrinsic geometry, $\Phi$ obeys
\begin{equation}
\label{eq:Phi-eom}
\nabla^h_a\nabla^h_b\Phi=-\left[\nabla_N\chi_{ab}-\chi_{ac}{\chi^{c}}_b+R_{aNbN}\right]\Phi
\end{equation}
where
% $a,b$ are abstract indices and 
%$R_{anbn}=R_{acbd}n^cn^d$.
the subscripts of $N$ represent the contraction with it, $R_{aNbN}=R_{\mu\alpha\nu\beta}e_a^\mu N^\alpha e_b^\nu N^\beta$.
%$R_{aNbN}=g(R(N,e_a)e_b,N)$.
We obtain the equation from Eq.~(4.6) in~\cite{capovilla} by setting the codimension one.
Note that $\chi_{ab}$ and $\mathcal{R}_{ab}$ is now defined on each surface $\Sigma_\varepsilon$ and can be differentiated along $N$.
The equation,
\begin{equation}
\label{eq:deformation-induced-metric}
2\chi_{ab}=\nabla_N h_{ab},
\end{equation}
is also useful in the calculations.
See Eq.~(3.8) in~\cite{capovilla} for the derivation.
\par
In the case where each surface $\Sigma_\varepsilon$ obeys the same equations of motion, the deformation $\mathcal{F}_\varepsilon$ should be called {\it a perturbation}.
Then we say {\it $\Sigma$ is stable} against the perturbation if $\Phi$ is bounded as time evolves along $\Sigma$ and otherwise {\it unstable}.
%%%subsec
\subsection{Perturbation of a photon surface}
\label{sec:pert-psf}
We consider the $Z_2$-symmetric perturbation of a Z2PTJST $(\mathcal{M},g,\Sigma)$.
The perturbed spacetime $(\widetilde{\mathcal{M}},g,\widetilde{\Sigma})$ is also Z2PTJST and therefore, from Theorem~\ref{theorem:psf-wh}, the hypersurface $\widetilde{\Sigma}$ is another photon surface of the original spacetime $(M_\pm,g_\pm)$.
%That is, the shell perturbation corresponds to the perturbation of the photon surface $\Sigma$ to another photon surface $\widetilde{\Sigma}$ in $(M_\pm,g_\pm)$.
More precisely, $\Sigma$ and $\widetilde{\Sigma}$ corresponds to CMC photon surfaces $\Sigma_\pm$ and $\widetilde{\Sigma}_\pm$ of the $\Lambda$-vacuum spacetime $(M_\pm,g_\pm)$, respectively, because of Theorem~\ref{theorem:psf-wh} and Proposition~\ref{proposition:psf-in-vacuum}.
Then, what we do in the following is to impose the conditions, CMC, totally umbilic (recall Theorem~\ref{theorem:equivalent-psf}), and $\Lambda$-vacuum on the deformation formalism, Eqs.~(\ref{eq:ricci-derivative}) and~(\ref{eq:Phi-eom}).
\par
Let $\Sigma$ be  a CMC photon surface of a $\Lambda$-vacuum spacetime $(M,g)$.
The EOMs of $\Sigma$ are given by
\begin{eqnarray}
\label{eq:umbilic-condition}
\chi_{ab}&=&\frac{\theta}{d}h_{ab},\\
\label{eq:cmc-condition}
\theta&=&const.
\end{eqnarray}
with the $\Lambda$-vacuum condition on the spacetime,
\begin{equation}
\label{eq:vacuum-condition}
G_{\mu\nu}=-\Lambda g_{\mu\nu},
%G_{ab}=0.
\end{equation}
where $G_{\mu\nu}$ is the Einstein tensor.
We denote the perturbation of $\Sigma$ to $\Sigma_\varepsilon$, which also obeys the EOMs, as $\mathcal{F}^{\rm PS}_\varepsilon:\Sigma\to\Sigma_\varepsilon$.
\par
First we calculate Eq.~(\ref{eq:ricci-derivative}) for $\mathcal{F}^{\rm PS}_\varepsilon$.
The contracted Gauss-Codazzi relation in $\Lambda$-vacuum reads
\begin{equation}
%\mathcal{R}+\chi^{ab}\chi_{ab}-\theta^2=G_{ab}n^an^b=0.
\mathcal{R}+\chi^{ab}\chi_{ab}-\theta^2=-2G_{NN}=2\Lambda.
\end{equation}
From Eq.~(\ref{eq:umbilic-condition}), the equation reduces to
\begin{equation}
\label{eq:ricci-theta}
\mathcal{R}=\frac{d-1}{d}\theta^2+2\Lambda.
\end{equation}
Then the LHS of Eq.~(\ref{eq:ricci-derivative}) reduces to
\begin{equation}
%\left(\nabla_n\mathcal{R}\right)\Phi=\frac{d-1}{d}\nabla_n\left(\theta^2\right)\Phi=2\frac{d-1}{d}\left(\theta\nabla_n\theta\right)\Phi
\left(\nabla_N\mathcal{R}\right)\Phi=2\frac{d-1}{d}\left(\theta\nabla_N\theta\right)\Phi.
\end{equation}
The RHS of Eq.~(\ref{eq:ricci-derivative}) reduces to
\begin{eqnarray}
&&-2\chi_{ab}\mathcal{R}^{ab}\Phi+2\nabla^h_a\nabla^h_b\left(\chi^{ab}\Phi\right)-2\Delta^h\left(\theta\Phi\right)\nonumber\\
&=&-2\frac{\theta}{d}\mathcal{R}\Phi+2\nabla^h_a\nabla^h_b\left(\frac{\theta}{d}h^{ab}\Phi\right)-2\Delta^h\left(\theta\Phi\right)\nonumber\\
%&=&-2\frac{\theta}{d}\mathcal{R}\Phi-2\frac{d-1}{d}\Delta^h\left(\theta\Phi\right)\nonumber\\
&=&-2\frac{\theta}{d}\mathcal{R}\Phi-2\frac{d-1}{d}\theta\Delta^h\Phi
\end{eqnarray}
where we have used Eqs.~(\ref{eq:umbilic-condition}) and~(\ref{eq:cmc-condition}) in the first and last equalities, respectively.
Equating both sides with the use of Eq.~(\ref{eq:ricci-theta}), we finally obtain
\begin{equation}
\label{eq:nabla-n-theta}
\nabla_N\theta=-\frac{\theta^2}{d}-\frac{2}{d-1}\Lambda-\Phi^{-1}\Delta^h\Phi.
\end{equation}
\par
Next we calculate Eq.~(\ref{eq:Phi-eom}) for $\mathcal{F}^{\rm PS}_\varepsilon$.
From Eq.~(\ref{eq:umbilic-condition}), the first term in the brackets in Eq.~(\ref{eq:Phi-eom}) reduces to
\begin{eqnarray}
\nabla_N\chi_{ab}
\label{eq:derivative-n-chi}
&=&\nabla_N\left(\frac{\theta}{d}h_{ab}\right)\nonumber\\
%&=&\frac{1}{d}\left[(\nabla_N\theta) h_{ab}+\theta\nabla_N h_{ab}\right]\nonumber\\
%&=&\frac{1}{d}\left[(\nabla_N\theta) h_{ab}+2\theta\chi_{ab}\right]\nonumber\\
%&=&\frac{1}{d}\left[\nabla_N\theta+2\frac{\theta^2}{d}\right]h_{ab}\nonumber\\
&=&\frac{1}{d}\left[\frac{\theta^2}{d}-\frac{2}{d-1}\Lambda -\Phi^{-1}\Delta^h\Phi\right]h_{ab}
\end{eqnarray}
where Eqs.~(\ref{eq:deformation-induced-metric}) and~(\ref{eq:nabla-n-theta}) were used in the last equality.
Then the LHS of Eq.~(\ref{eq:Phi-eom}) reduces to
\begin{eqnarray}
&&-\left[\nabla_N\chi_{ab}-\chi_{ac}{\chi^{c}}_b+R_{aNbN}\right]\Phi\nonumber\\
&=&\frac{1}{d}\Delta^h\Phi h_{ab}+\left[-R_{aNbN}+\frac{2}{d(d-1)}\Lambda h_{ab}\right]\Phi
\end{eqnarray}
where Eqs.~(\ref{eq:umbilic-condition}) and~(\ref{eq:derivative-n-chi}) were used.
Substituting the result into Eq.~(\ref{eq:Phi-eom}), we obtain
\begin{equation}
\nabla^h_a\nabla^h_b\Phi-\frac{1}{d}\Delta^h\Phi h_{ab}=\left[-R_{aNbN}+\frac{2}{d(d-1)}\Lambda h_{ab}\right]\Phi,
\end{equation}
or from the $\Lambda$-vacuum condition Eq.~(\ref{eq:vacuum-condition}),
\begin{equation}
\label{eq:cmcpsf-perturbation}
\nabla^h_a\nabla^h_b\Phi-\frac{1}{d}\Delta^h\Phi h_{ab}=-C_{aNbN}\Phi.
\end{equation}
\par
We have another expression of Eq.~(\ref{eq:cmcpsf-perturbation}).
The contraction of Gauss-Codazzi relation,
\begin{equation}
R_{\alpha\mu\beta\nu}h^\alpha_ah^\mu_ch^\beta_bh^\nu_d=\mathcal{R}_{acbd}-\chi_{ab}\chi_{cd}+\chi_{ad}\chi_{cb},\nonumber
\end{equation}
with $h^{cd}$ gives
%\footnote{$R_{enfn}h^e_ah^f_b=R_{enbn}h^e_a=R_{anfn}h^f_b=R_{anbn}$.}
\begin{equation}
R_{\alpha\beta}h^\alpha_ah^\beta_b-R_{aNbN}=\mathcal{R}_{ab}-\theta\chi_{ab}+\chi_{ac}\chi^c{}_b.
\end{equation}
from $h^{cd}h^\mu_c h^\nu_d=h^{\mu\nu}=g^{\mu\nu}-N^\mu N^\nu$.
Using the fact that the spacetime is $\Lambda$-vacuum and the hypersurface is totally umbilic, the equation reduces to
\begin{equation}
-C_{aNbN}+\frac{2}{d}\Lambda h_{ab}=\mathcal{R}_{ab}-\frac{d-1}{d^2}\theta^2h_{ab}.
\end{equation}
From Eq.~(\ref{eq:ricci-theta}), we have
\begin{equation}
-C_{aNbN}=\mathcal{R}_{ab}-\frac{\mathcal{R}}{d}h_{ab}.
\end{equation}
Then Eq.~(\ref{eq:cmcpsf-perturbation}) is rewritten as
\begin{equation}
\label{eq:stability-another}
\nabla^h_a\nabla^h_b\Phi-\frac{1}{d}\Delta^h\Phi h_{ab}=\left(\mathcal{R}_{ab}-\frac{\mathcal{R}}{d}h_{ab}\right)\Phi.
\end{equation}
The expression tells us that the linear perturbation $\Phi$ is governed only by the intrinsic geometry, the Ricci curvature $\mathcal{R}_{ab}$, of $(\Sigma,h)$.
%\red{[to Proposition?]}
\par
Note the master equation of the perturbation, Eq.~(\ref{eq:cmcpsf-perturbation}) or~(\ref{eq:stability-another}), does not have its trace part. Therefore, $\Phi^{-1}\Delta^h\Phi$ is unspecified {\it a priori} and will be determined after we solve the trace-free part of the equation for given initial values of $\Phi$.
%%%subsec
\subsection{Stability of the shell and a photon surface}
Let us physically interpret the master equation Eq.~(\ref{eq:cmcpsf-perturbation}) for the perturbation $\Phi$ of the photon surface $\Sigma$.
\par
Let $k_p\in T_p\Sigma$ be a null vector at a point $p\in \Sigma$.
Since $\Sigma$ is a photon surface, there always exists a null geodesic $\gamma(\lambda)$ everywhere tangent to $\Sigma$ such that $k_p=\dot{\gamma}(0)$. 
The contraction of Eq.~(\ref{eq:cmcpsf-perturbation}) with $k=\dot{\gamma}(\lambda)$  gives
\begin{equation}
\nabla^h_k\nabla^h_k\Phi=-C_{kNkN}\Phi.
\end{equation}
Therefore, we have
\begin{equation}
\label{eq:along-lambda}
\frac{d^2}{d\lambda^2}\Phi(\lambda)=-C_{kNkN}\Phi(\lambda)
\end{equation}
along $\gamma(\lambda)$.
This implies that given the initial values $\left.\Phi\right|_{t=t_0}$ and $\left.\partial_a\Phi\right|_{t=t_0}$ at time $t=t_0$, the value of $\Phi$ at the point $p$ in the future is determined by integrating the ordinary differential equation Eq.~(\ref{eq:along-lambda}) along $\gamma(\lambda)$ from $q=\gamma(\lambda_{t=t_0})$ to $p$.
%The existence of such a null geodesic $\gamma$ is guaranteed by the fact that $\Sigma$ is a photon surface.
\par
The factor $-C_{kNkN}$ in Eq.~(\ref{eq:along-lambda}) coincides with what determines the {\it stability of the null geodesic $\gamma$ against the perturbation orthogonal to $\Sigma$}~\cite{koga:psfstability}.
That is, an orthogonally perturbed null geodesic $\tilde{\gamma}$ from $\gamma$ is stable (attracted to $\Sigma$) if $-C_{kNkN}<0$ and unstable (repelled from $\Sigma$) if $-C_{kNkN}>0$ in our $\Lambda$-vacuum case (see Proposition~2 in~\cite{koga:psfstability}).
In particular, if $\Sigma$ is {\it a strictly stable photon surface}, i.e. all the null geodesics on $\Sigma$ are stable, the factor in Eq.~(\ref{eq:along-lambda}) is always negative along any null geodesics.
This is suggestive that $\Phi$ will be bounded as time evolves.
In fact, in the case where $-C_{kNkN}$ varies sufficiently slowly, $\Phi$ oscillates with the almost constant amplitude along any null geodesic and will be bounded.
This is quite natural from the physical point of view.
Since a photon surface is a hypersurface generated by null geodesics, the photon surface $\Sigma_\varepsilon$ perturbed from $\Sigma$ is generated by the orthogonally perturbed null geodesics $\tilde{\gamma}$.
Therefore, if all the perturbed null geodesics $\tilde{\gamma}$ are attracted to $\Sigma$, $\Sigma_\varepsilon$ should also be attracted to $\Sigma$ and is stable.
\par
Then our conclusion is this: if $\Sigma$ is a strictly stable photon surface, then $\Sigma$ is stable against the linear perturbation given by $\mathcal{F}^{\rm PS}_\varepsilon:\Sigma\to\Sigma_\varepsilon$.
For Z2PTJSTs, we also conclude as this: if $\Sigma_\pm$ is a strictly stable photon surface of $(M_\pm,g_\pm)$, then the Z2PTJST $(\mathcal{M},g,\Sigma)$ constructed from $(M_\pm,g_\pm)$ is stable against shell perturbations preserving the $Z_2$-symmetry.
\par
%In the following, we consider the case where the geometry of $\Sigma$ is static and spherically, planar, and hyperbolically symmetric.
%$-C_{kNkN}=const.$ for any $k\in T_p\Sigma$ at any point $p\in\Sigma$.
%The case is frequently seen in, for example,~\cite{visser,barcelo,kokubu}.
%Solving explicitly the master equation for the perturbation, Eq.~(\ref{eq:cmcpsf-perturbation}), we confirm that the results agree with the above conclusion indeed.
In Appendix~\ref{app:example}, we explicitly solve the perturbation equation, Eq.~(\ref{eq:cmcpsf-perturbation}), in the case where the geometry of $\Sigma$ is static and spherically, planar, and hyperbolically symmetric.
The case is frequently seen in, for example,~\cite{visser,barcelo,kokubu}.
The induced metric is given by
\begin{equation}
\label{eq:static-const-curvature-surface-main}
h=a^2(-dt^2+\sigma)\nonumber
\end{equation}
where $a$ is constant and $\sigma$ represents the metric of the $(d-1)$-dimensional space of constant curvature $\alpha=\pm1,0$.
The general solution in spherically ($\alpha=1$) and hyperbolically ($\alpha=-1$) symmetric case is given by
\begin{equation}
\label{eq:general-pert-alpha-nzero}
\Phi=Ce^{\sqrt{\alpha}t}+Fe^{-\sqrt{\alpha}t}\;\;\;(\alpha=\pm1)
\end{equation}
with the arbitrary constants $C$ and $F$ from Eq.~(\ref{eq:Phi-solution-ex}).
In the planar case ($\alpha=0$), we have
\begin{equation}
\Phi=D\eta_{ab}x^ax^b+B_ax^a+C\;\;\;(\alpha=0)
\end{equation}
with the arbitrary constants $D$, $B_a$, and $C$ from Eq.~(\ref{eq:Phi-solution-ex-flat}).
$x^a=(t,x^i)$ is the Cartesian coordinate on $\Sigma$.
Since the Weyl curvature gives $-C_{kNkN}=\mathcal{R}_{kk}=\alpha$ for any null vector $k\in T\Sigma$ with an appropriate scaling, $\Sigma$ is a strictly stable photon surface if and only if $\alpha=-1$, i.e. the hyperbolic case, and it is only the case where the solution $\Phi$ is bounded for any possible perturbation.
Therefore, the result agrees with the above conclusion, indeed.
See Appendix~\ref{app:example} for the derivation and the detailed interpretations of the result.

%%%sec
\section{Uniqueness of pure-tensional wormholes}
\label{sec:uniqueness}
A joined spacetime $(\mathcal{M},g,\Sigma)$ is a wormhole spacetime if both regions $\bar M_\pm$ have asymptotically flat domains.
If, additionally, $(\mathcal{M},g,\Sigma)$ is $Z_2$-symmetric and pure-tensional, the spacetimes $(\bar{M}_\pm,g_\pm)$ constituting $(\mathcal{M},g,\Sigma)$ are asymptotically flat spacetimes with their inner boundaries $\Sigma_\pm$ being photon surfaces according to Theorem~\ref{theorem:psf-wh}.
Cederbaum established the uniqueness theorem of such spacetimes, $(\bar{M}_\pm,g_\pm)$, with some assumptions~\cite{cederbaum}.
Using the theorem, we prove the uniqueness theorem of pure-tensional thin shell wormholes.
\par
In Cederbaum's uniqueness theorem, the spacetime is assumed to be AF-geometrostatic (``AF" stands for ``asymptotically flat")(see below) and the lapse function with respect to the static Killing vector is assumed to be constant along the photon surface being the inner boundary.
In the following, we call the photon surface under the assumptions {\it Cederbaum's photon sphere} (Definition~2.6 in~\cite{cederbaum}).
%%%subsec
\subsection{Pure-tensional wormhole}
An {\it AF-geometrostatic spacetime} (Definition~2.1 in~\cite{cederbaum}) is a spacetime which is static, asymptotically flat, and a vacuum solution to Einstein equation with the cosmological constant $\Lambda=0$.
We define the following AF-geometrostatic wormhole spacetime:
\begin{definition}[Static pure-tensional wormhole]
\label{definition:af-ptwh}
Let $(\mathcal{M},g,\Sigma)$ be a Z2PTJST constructed from $(M_\pm,g_\pm)$.
%$(\mathcal{M},g,\Sigma)$ is called a static pure-tensional wormhole if $(M_\pm,g_\pm)$ are AF-geometrostatic spacetimes, $\Sigma_\pm$ are static, and the asymptotic regions are in $M_\pm^{\rm ex}$.
$(\mathcal{M},g,\Sigma)$ is called a static pure-tensional wormhole if $(\bar{M}_\pm,g_\pm)$ are AF-geometrostatic spacetimes and their inner boundaries $\Sigma_\pm$ are static.
\end{definition}
%%%subsec
\subsection{Proof of the uniqueness}
We impose a technical assumption on a static pure-tensional wormhole $(\mathcal{M},g,\Sigma)$ to prove the uniqueness theorem.
Since $(\mathcal{M},g,\Sigma)$ is $Z_2$-symmetric and $\Sigma$ is static, $(\bar{M}_\pm,g_\pm)$ have a common static Killing vector field $\partial_t$ which are tangent to $\Sigma_\pm$ on $\Sigma_\pm$ and satisfies $\psi^*:\partial_t|_{\Sigma_+}\mapsto\partial_t|_{\Sigma_-}$ for the identification of the inner boundaries $\psi:\Sigma_+\to\Sigma_-$.
The lapse functions $\mathcal{N}_\pm$ of $(\bar{M}_\pm,g_\pm)$ and $\mathcal{N}$ of $(\mathcal{M},g,\Sigma)$ with respect to the Killing vector are given by $\mathcal{N}_\pm^2=-g_\pm(\partial_t,\partial_t)$ and $\mathcal{N}^2=-g(\partial_t,\partial_t)$, respectively.
Since the first junction condition is satisfied from the $Z_2$-symmetry, $\mathcal{N}^2=-g(\partial_t,\partial_t)=-\frac{1}{2}[g_+(\partial_t,\partial_t)+g_-(\partial_t,\partial_t)]=-g_\pm(\partial_t,\partial_t)=\mathcal{N}_\pm^2$.
Then we assume that $\mathcal{N}_\pm$ are constant along $\Sigma_\pm$ in $\bar{M}_\pm$ and therefore, $\mathcal{N}=const.$ along $\Sigma$ in $\mathcal{M}$.
This is equivalent to the requirement that the time-time component of the surface stress energy $S_{ab}$ is constant, $S_{tt}=-\epsilon h_{tt}=\epsilon {\mathcal{N}|_{\Sigma}}^2=const.$, since $\epsilon=const.$ from Theorem~\ref{theorem:psf-wh}.
Although the assumption may restrict the class of solutions, the pure-tensional wormholes which have been investigated satisfy it~\cite{barcelo,kokubu}.
We have the following theorem.
\begin{theorem}[Uniqueness of static pure-tensional wormhole spacetimes]
\label{theorem:uniqueness}
Let $(\mathcal{M},g,\Sigma)$ be a four-dimensional static pure-tensional wormhole constructed from $(M_\pm,g_\pm)$.
Let $\epsilon$ and $\theta$ be the tension and the mean curvature of $\Sigma$, respectively.
Assume that the lapse functions $\mathcal{N}_\pm$ of $(\bar{M}_\pm,g_\pm)$ regularly foliate $\bar{M}_\pm$ and are constant along $\Sigma_\pm$.
Then, each of the regions $\bar{M}_\pm$ of $(\mathcal{M},g,\Sigma)$ is isometric to the Schwarzschild spacetime with the mass $m=1/(\sqrt{3}\theta)=-1/(6\sqrt{3}\pi \epsilon)>0$.
\end{theorem}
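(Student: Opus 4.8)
The plan is to obtain the theorem as a corollary of Cederbaum's photon-sphere uniqueness theorem~\cite{cederbaum}, the only genuinely new ingredients being the identification of $\Sigma_\pm$ as a Cederbaum photon sphere (for which Theorem~\ref{theorem:psf-wh} and Proposition~\ref{proposition:psf-in-vacuum} do the work) and a short Schwarzschild computation fixing the mass.

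First I would verify the hypotheses of Cederbaum's theorem. By Definition~\ref{definition:af-ptwh}, each region $(\bar M_\pm,g_\pm)$ is AF-geometrostatic; in particular $\Lambda_\pm=0$ and it is a connected static vacuum asymptotically flat spacetime with inner boundary $\Sigma_\pm$. Since $(\mathcal M,g,\Sigma)$ is a Z2PTJST, Theorem~\ref{theorem:psf-wh} gives that the inner boundaries $\Sigma_\pm$ are timelike totally umbilic photon surfaces of $(\bar M_\pm,g_\pm)$, with constant mean curvature $\theta=\theta_+=-\theta_-$ related to the tension by $\theta=-4\pi[d/(d-1)]\epsilon$. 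Definition~\ref{definition:af-ptwh} also states that $\Sigma_\pm$ are static, i.e. the static Killing field $\partial_t$ is tangent to them, and the hypotheses of the present theorem supply that the lapse $\mathcal N_\pm$ is constant along $\Sigma_\pm$ and that $\mathcal N_\pm$ regularly foliates $\bar M_\pm$. These are precisely the defining properties of a photon sphere in the sense of Cederbaum (Definition~2.6 in~\cite{cederbaum}) arising as the inner boundary of an AF-geometrostatic spacetime. Finally, the $Z_2$-symmetry provides an isometry $\phi:(\bar M_+,g_+)\to(\bar M_-,g_-)$, so it suffices to analyse the $+$ region.

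Next I would invoke Cederbaum's theorem for $(\bar M_+,g_+)$: it is isometric to the domain $\{r\ge 3m\}$ of the Schwarzschild spacetime of some mass $m>0$, with $\Sigma_+$ carried to the photon sphere $\{r=3m\}$ and the unit normal $N_+$ to the outward radial direction $\partial_r$ — this orientation being the one consistent with the JST convention that $N_+$ points into $M_+^{\rm ex}$, the asymptotic region we retain. A short, convention-sensitive computation of the second fundamental form of a surface $\{r=r_0\}$ in Schwarzschild gives mean curvature $\theta(r_0)=(2r_0-3m)/\bigl(r_0^2\sqrt{1-2m/r_0}\,\bigr)$; evaluating at $r_0=3m$ yields $\theta=\theta_+=1/(\sqrt3\,m)$, hence $m=1/(\sqrt3\,\theta)$. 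Combining with $\theta=-4\pi[d/(d-1)]\epsilon=-6\pi\epsilon$ for $d=3$ from Theorem~\ref{theorem:psf-wh} gives $m=-1/(6\sqrt3\,\pi\epsilon)$, and $m>0$ is already part of Cederbaum's conclusion (equivalently, Schwarzschild admits a photon sphere only when $m>0$). Transporting the conclusion back to $(\bar M_-,g_-)$ by $\phi$ finishes the argument.

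The substantive ingredient — that a $Z_2$-symmetric pure-tension shell in $\Lambda$-vacuum is forced to lie on a totally umbilic photon surface of constant mean curvature — is already established in Theorem~\ref{theorem:psf-wh} and Proposition~\ref{proposition:psf-in-vacuum}, so the remaining work is essentially checking that each clause of Cederbaum's photon-sphere definition is met by our hypotheses and then reading off the mass. I expect the only place a careful reader might pause is the orientation bookkeeping that ties together the sign of $\theta=\theta_+$, the Schwarzschild mean-curvature formula, and the relation $\theta=-6\pi\epsilon$, so as to land on $m>0$ with the stated numerical factors; all of the heavy analytic work is outsourced to~\cite{cederbaum}.
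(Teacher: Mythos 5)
Your proposal is correct and follows essentially the same route as the paper: verify that $(\bar M_\pm,g_\pm)$ are AF-geometrostatic with $\Sigma_\pm$ being Cederbaum photon spheres (via Theorem~\ref{theorem:psf-wh} and the lapse/foliation hypotheses), invoke Cederbaum's uniqueness theorem on the $+$ side, and transport the conclusion to the $-$ side by the $Z_2$-isometry. The only cosmetic difference is that you re-derive $m=1/(\sqrt{3}\,\theta_+)$ by computing the mean curvature of $\{r=3m\}$ in Schwarzschild (your formula and evaluation are correct), whereas the paper simply reads this mass formula off the statement of Cederbaum's Theorem~3.1.
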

\begin{proof}
From Definition~\ref{definition:af-ptwh}, $\bar{M}_\pm:=M_\pm\setminus M_\pm^{\rm in}$, which constitute $\mathcal{M}$ by $\mathcal{M}=\bar{M}_+\cup_{\psi,\psi_N}\bar{M}_-$, are the manifolds with the asymptotically flat regions and the inner boundaries $\Sigma_\pm$.
From Theorem~\ref{theorem:psf-wh}, $\Sigma_\pm$ of $(\bar{M}_\pm,g_\pm)$ are photon surfaces with the constant mean curvatures
\begin{equation}
\label{eq:theta-p-uniqueness}
\theta_\pm=\mp6\pi \epsilon.
\end{equation}
Therefore, from the assumptions, $(\bar{M}_\pm,g_\pm)$ are AF-geometrostatic spacetimes regularly foliated by $\mathcal{N}_\pm$ and the inner boundaries $\Sigma_\pm$ are Cederbaum's photon spheres, i.e. they are photon surfaces with the constant lapse functions $\mathcal{N}_\pm$ along them in the AF-geometrostatic spacetimes (Definition~2.6 in~\cite{cederbaum}).
Then, from the uniqueness theorem of photon spheres (Theorem~3.1 in~\cite{cederbaum}), $(\bar{M}_+,g_+)$ is Schwarzschild spacetime with the mass $m=1/(\sqrt{3}\theta_+)>0$.
From Eq.~(\ref{eq:theta-p-uniqueness}), we have
\begin{equation}
m=\frac{1}{\sqrt{3}\theta_+}=-\frac{1}{6\sqrt{3}\pi \epsilon}>0
\end{equation}
and $\epsilon<0$.
Since the $Z_2$-symmetry of $(\mathcal{M},g,\Sigma)$ implies that $(\bar{M}_\pm,g_\pm)$ are isometric, $(\bar{M}_-,g_-)$ is also Schwarzschild spacetime with the mass
\begin{equation}
m=\frac{1}{\sqrt{3}\theta_+}=-\frac{1}{\sqrt{3}\theta_-}=-\frac{1}{6\sqrt{3}\pi \epsilon}>0.
\end{equation}
\end{proof}
Note that the static photon surface in Schwarzschild spacetime is the hypersurface of radius $3m$~\cite{claudel}.

%%%sec
\section{Conclusion}
\label{sec:conclusion}
We defined a joined spacetime (JST), which is obtained by truncating and gluing two spacetimes along the boundaries, in Sec.~\ref{sec:joined-spacetime}.
$Z_2$-symmetry of JSTs was defined in Sec.~\ref{sec:z2-symmetry}.
A pure-tensional JST (PTJST) was defined as a $\Lambda$-vacuum JST with a pure tension shell. 
For a $Z_2$-symmetric pure-tensional joined spacetime (Z2PTJST), we proved that its shell must be photon surfaces of the original spacetimes constituting the JST (Theorem~\ref{theorem:psf-wh}) in Sec.~\ref{sec:psf-wht}.
Conversely, if two isometric $\Lambda$-vacuum spacetimes have photon surfaces, we can join them to give a Z2PTJST.
Therefore, we have solutions of Z2PTJSTs as many as the photon surfaces found in, for example,~\cite{claudel,gibbons_2016,koga3,koga:nonsym-psf}.
\par
Z2PTJSTs have been widely investigated in the contexts of wormholes~\cite{barcelo,kokubu}, baby universes~\cite{barcelo}, and brane worlds~\cite{randall,randall2}.
The shells correspond to the throat in the wormhole cases and the brane we live in in the brane world cases.
One can infer that we can extend Theorem~\ref{theorem:psf-wh} to electrovacuum cases because the coincidence of the shells and photon spheres holds in electrovacuum in~\cite{barcelo,kokubu}.
It is fascinating since the electric charges enrich the variety of thin shell wormhole solutions.
\par
Theorem~\ref{theorem:psf-wh} can be used to deny the possibility to construct Z2PTJST from a given spacetime.
In a stationary axisymmetric spacetime like Kerr spacetime, there can be null circular geodesics, however, photon surfaces would not exist on the radii.
It is because the corotating and counterrotating circular orbits have the different radii in general.
Even if the corotating orbits generate a hypersurface on the radius, the counterrotating orbits cannot be tangent to the surface.
Then the hypersurface does not satisfy the definition of photon surface.
Therefore, we cannot join the two copies of the spacetime along the radii of null circular geodesics to give Z2PTJST.
One needs to violate the $Z_2$-symmetry, the pure tension equation of state of the shell, or $\Lambda$-vacuum condition to construct shell wormholes from axisymmetric spacetimes.
\par
Since the shell of a Z2PTJST coincides with a photon surface, the stability of the JST against the shell perturbation also coincides with the stability against the surface perturbation of the photon surface in the original spacetime.
In Sec.~\ref{sec:stability}, after deriving the master equation for the perturbation of photon surfaces, Eq.~(\ref{eq:stability-another}), from the surface deformation formalism by Capovilla and Guven~\cite{capovilla}, we found its close relationship to the stability of null geodesics on a photon surface introduced in~\cite{koga:psfstability}.
Namely, if null geodesics on a photon surface are stable (unstable), the photon surface itself, and therefore the Z2PTJST, is stable (unstable).
We also confirmed it by solving the perturbation equation for photon surfaces explicitly in Appendix~\ref{app:example} with the specific induced metrics.
\par
It is remarkable that the perturbation equation, Eq.~(\ref{eq:stability-another}), is also useful to seek photon surfaces in the vicinity of a given photon surface.
Actually, we found the hyperboloid Eq.~(\ref{eq:hyperboloid-in-minkowski}) by perturbing the plane of $y=0$ in Minkowski spacetime $\mathbb{M}^3$.
The hyperboloid, as well as the plane, is known to be a timelike photon surface of $\mathbb{M}^3$, indeed~\cite{claudel}.
In contrast with the planar case, in the spherically and hyperbolically symmetric cases, it is suggested that there would not be photon surfaces violating the spatial symmetries because the perturbation $\Phi$ in Eq.~(\ref{eq:Phi-solution-ex}) depends only on the time $t$.
\par
In the wormhole cases of Z2PTJSTs with the vanishing cosmological constant, we applied the uniqueness theorem of photon spheres by Cederbaum~\cite{cederbaum} and established the uniqueness theorem of static pure-tensional wormholes with $Z_2$-symmetry (Theorem~\ref{theorem:uniqueness}) in Sec.~\ref{sec:uniqueness}.
The theorem states that both sides of the wormhole are isometric to Schwarzschild spacetime with the same masses.
It is also interesting that the tension of the shell and the mass of the wormhole are inversely proportional to each other and the positive mass implies the negative tension and the negative energy of the shell.
This is consistent with the result that wormhole spacetimes have to violate energy conditions in the vicinity of the throats~\cite{morris,hochberg}.
\par
The static pure-tensional wormhole in our uniqueness theorem has the photon surface (shell) of the geometry Eq.~(\ref{eq:static-const-curvature-surface-main}) with $\alpha=+1$.
Therefore, from the general solution of the perturbation, Eq.~(\ref{eq:general-pert-alpha-nzero}), in Sec.~\ref{sec:stability}, the static pure-tensional wormhole is unstable against general throat (shell) perturbation preserving $Z_2$-symmetry.

\begin{acknowledgments}
The author thanks T. Kokubu, T. Harada, H. Maeda, M. Kimura, H. Yoshino, Y. Nakayama, Y. Hatsuda, K. Nakashi, and T. Katagiri for their very helpful discussions and comments.
This work was supported by JSPS KAKENHI Grant No. JP19J12007.
\end{acknowledgments}

\appendix
%%%sec
\section{Gaussian normal coordinates on a $Z_2$-symmetric joined spacetime}
\label{app:z2-coords}
We consider a coordinate system of $(\mathcal{M},g,\Sigma)$ which respects the $Z_2$-symmetry.
With the coordinate system, we can verify Definition~\ref{definition:z2-symmetry} explicitly.
\par
Let $(\mathcal{M},g,\Sigma)$ be $Z_2$-symmetric joined spacetime of $(M_\pm,g_\pm)$.
Consider Gaussian normal coordinates $\mathcal{C}_\pm:p_\pm\in M_\pm \mapsto \left(l_\pm,x_\pm\right)\in\mathbb{R}^{d+1}$ with respect to $\Sigma_\pm$ such that
\begin{equation}
g_\pm=dl_\pm^2+h_{ij}^\pm(\pm l_\pm,x_\pm)dx_\pm^idx_\pm^j
\end{equation}
with the conditions,
\begin{eqnarray}
\Sigma_\pm=\left\{l_\pm=0\right\},\\
N_\pm=\partial l_\pm.
\end{eqnarray}
With the isometry $\phi:(M_+,g_+)\to (M_-,g_-)$ in Definition~\ref{definition:z2-symmetry}, we can impose that $\mathcal{C}_-\circ\phi\circ\mathcal{C}_+^{-1}:(l_+,x_+)\mapsto(-l_-,x_-)$ on the coordinates.
It leads to
%If we assume that the isometry $\phi:(M_+,g_+)\to (M_-,g_-)$ in Definition~\ref{definition:z2-symmetry} leads to that $\mathcal{C}_-\circ\psi\circ\mathcal{C}_+^{-1}:(l_+,x_+)\mapsto(-l_-,x_-)$, it implies that
%\footnote{Indeed, $g_+=dl_+^2+h^+_{ij}(l_+,x_+)dx_+^idx_+^j\mapsto dl_-^2+h^+_{ij}(-l_-,x_-)dx_-^idx_-^j$.
%This must be equivalent to $g_-=dl_-^2+h^-_{ij}(-l_-,x_-)dx_-^idx_-^j$ for any $(l_-,x_-)$.
%Therefore, $h^+_{ij}(-l_-,x_-)=h^-_{ij}(-l_-,x_-)$}
\begin{equation}
h_{ij}^+(l,x)=h_{ij}^-(l,x)
\end{equation}
for a variable $l$.
Indeed, the assumption on the coordinates satisfies the requirement for $\phi$, Eqs.~(\ref{eq:z2-diffeo-sigma}) and~(\ref{eq:normal-inversion}):
\begin{eqnarray}
\left.\phi\right|_{\Sigma_+}:\Sigma_+=\left\{l_+=0\right\}\to\left\{l_-=0\right\}=\Sigma_-,\\
\phi^*:N_+=\partial l_+\mapsto -\partial l_-=-N_-.
\end{eqnarray}
Since $\mathcal{C}_-\circ\left.\phi\right|_{\Sigma_+}\circ\mathcal{C}_+^{-1}:(0,x_+)\mapsto(0,x_-)$, Eq.~(\ref{eq:z2-identification}) implies that $\mathcal{M}=\bar{M}_+\cup_{\psi,\psi_N}\bar{M}_-$ is obtained by, after the truncation $M_\pm\to \bar{M}_\pm$, identifying the coordinates on $\Sigma_\pm$ as
\begin{equation}
\label{eq:coords-identification-sigma}
\mathcal{C}_-\circ\psi\circ\mathcal{C}_+^{-1}:(0,x_+)\mapsto (0,x_-).
\end{equation}
From the identification of the normal vectors $\psi_N:N_+\mapsto N_-$, Eq.~(\ref{eq:normal-identification}), we also have
\begin{equation}
\label{eq:coords-identification-normal}
\left.\partial l_+\right|_{l_+=0}=\left.\partial l_-\right|_{l_-=0}.
\end{equation}
Then we introduce the coordinate system $\mathcal{C}:p\in\mathcal{M}\mapsto (l,x)$ into $\mathcal{M}$ by
\begin{equation}
(l,x)=\left\{
\begin{array}{ll}
(l_+,x_+) & (l\ge0)\\
(l_-,x_-) & (l<0).
\end{array}
\right. 
\end{equation}
This choice satisfies the conditions for the gluing, Eqs.~(\ref{eq:coords-identification-sigma}) and~(\ref{eq:coords-identification-normal}).
Finally, we obtain the metric distribution on $\mathcal{M}$,
\begin{equation}
g=dl^2+h_{ij}(|l|,x)dx^idx^j,
\end{equation}
where $h_{ij}(l,x):=h^+_{ij}(l,x)=h^-_{ij}(l,x)$ for $l\ge0$.
\par
Obviously, the transformation $l\to -l$ leaves $g$ and $\Sigma$ invariant and exchanges the regions of $\mathcal{M}$ as $M_+^{\rm ex}\leftrightarrow M_-^{\rm ex}$.
Definition~\ref{definition:z2-symmetry} gives a $Z_2$-symmetric joined spacetime, indeed.
\par
The quantities appearing in the junction conditions are given as follows.
From $h_\pm(X,Y)=g_\pm(X,Y)$ $\forall X,Y\in T\Sigma_\pm$, the induced metric is
\begin{equation}
h_+=h_{ij}(0,x)dx^idx^j=h_-.
\end{equation}
From $\chi_\pm=(1/2)\mathcal{L}_{N_\pm}h_\pm$~\cite{textbook:wald}, the second fundamental form is
\begin{equation}
\chi_+=\frac{1}{2}h_{ij,l}(0,x)dx^idx^j=-\chi_-.
\end{equation}
We can easily see that Proposition~\ref{proposition:z2-junction} holds from the expressions.

%%%sec
\section{Calculations in deformation of hypersurfaces}
\label{app:capovilla-recalculation}
We recalculate a part of the calculation in~\cite{capovilla} following their procedure.
%Here we adopt the notations in~\cite{capovilla}.
The notations in~\cite{capovilla} are converted to ours in the following.
\par
We consider an embedded surface $\Sigma$ of a spacetime $(M,g)$ and its deformation.
The dimension $d:=\dim(\Sigma)\ge1$ is arbitrary here and therefore we have $(D-d)$ unit normal vectors $N^i$ and the deviation scalars $\Phi_i$ where $i=1,...,D-d$.
From Eq.~(3.9) in~\cite{capovilla}, the deformation of the Christoffel symbol $\Gamma_{ab}{}^c$ with respect to the induced metric $h_{ab}$ of the surface $\Sigma$ is given by
\begin{eqnarray}
\label{eq:defo-connection}
\nabla_\delta \Gamma_{ab}{}^c
&=&\frac{1}{2}h^{cd}\left[\nabla^h_a\left(\nabla_\delta h_{bd}\right)+\nabla^h_b\left(\nabla_\delta h_{ad}\right)-\nabla^h_d\left(\nabla_\delta h_{ab}\right)\right]\nonumber\\
&=&h^{cd}\left[\nabla^h_a\left(\chi_{bd}{}^i\Phi_i\right)+\nabla^h_b\left(\chi_{ad}{}^i\Phi_i\right)-\nabla^h_d\left(\chi_{ab}{}^i\Phi_i\right)\right]
\end{eqnarray}
where $\delta:=\Phi_i N^i$, $\nabla$ is the covariant derivative associated with $g$, $\nabla^h$ is the covariant derivative associated with $h_{ab}$, $\chi_{ab}{}^i$ is the $i$-th extrinsic curvature of $\Sigma$ with respect to $N^i$, and $\nabla_\delta:=\delta^\mu \nabla_\mu$.
In general, a variation of a metric $g_{\mu\nu}\to g_{\mu\nu}+\Delta g_{\mu\nu}$ gives the variations $\Delta$ of the connection coefficients $\Gamma^\alpha{}_{\mu\nu}$ and the curvatures $R^\alpha{}_{\mu\beta\nu}$, $R_{\mu\nu}$, and $R$ of a spacetime as~\cite{textbook:poisson,textbook:hawking}
\begin{eqnarray}
%\delta\Gamma^\alpha{}_{\mu\nu}
%&=&\frac{1}{2}g^{\alpha\beta}\left(\nabla_\mu\delta g_{\nu\beta}+\nabla_\nu\delta g_{\mu\beta}-\nabla_\beta\delta g_{\mu\nu}\right),\\
\Delta R^\alpha{}_{\mu\beta\nu}
&=&\nabla_\beta\Delta\Gamma^\alpha{}_{\mu\nu}-\nabla_\nu\Delta\Gamma^\alpha{}_{\mu\beta},\\
\Delta R_{\mu\nu}
%&=&\delta R^\alpha{}_{\mu\alpha\nu}\nonumber\\
&=&\nabla_\alpha\Delta\Gamma^\alpha{}_{\mu\nu}-\nabla_\nu\Delta\Gamma^\alpha{}_{\mu\alpha},\\
\Delta R
%&=&-R^{\mu\nu}\delta g_{\mu\nu}+g^{\mu\nu}\delta R_{\mu\nu}\nonumber\\
&=&-R^{\mu\nu}\Delta g_{\mu\nu}+g^{\mu\nu}\left(\nabla_\alpha\Delta\Gamma^\alpha{}_{\mu\nu}-\nabla_\nu\Delta\Gamma^\alpha{}_{\mu\alpha}\right),\\
\Delta \left(\sqrt{-g}R\right)
&=&\frac{1}{2}g^{\mu\nu}\Delta g_{\mu\nu}\sqrt{-g}R+\sqrt{-g}\Delta R.
\end{eqnarray}
Applying the equations to our case and replacing $\Delta$ by $\nabla_\delta$, we have
\begin{eqnarray}
\nabla_\delta\mathcal{R}^a_{cbd}
&=&\nabla^h_b\left(\nabla_\delta\Gamma_{cd}{}^a\right)-\nabla^h_c\left(\nabla_\delta\Gamma_{bd}{}^a\right),\\
\nabla_\delta\mathcal{R}_{cd}
%&=&D_\delta\mathcal{R}^a_{cad}\nonumber\\
&=&\nabla^h_a \left(\nabla_\delta\Gamma_{cd}{}^a\right)-\nabla^h_d\left( \nabla_\delta\Gamma_{ca}{}^a\right)\\
\nabla_\delta\mathcal{R}
%&=&-\mathcal{R}^{cd} D_\delta\gamma_{cd}+\gamma^{cd}D_\delta \mathcal{R}_{cd}\nonumber\\
&=&-\mathcal{R}^{cd} \nabla_\delta h_{cd}+h^{cd}\left(\nabla^h_a \nabla_\delta\Gamma_{cd}{}^a-\nabla^h_d \nabla_\delta\Gamma_{ca}{}^a\right),\\
\nabla_\delta \left(\sqrt{-\gamma}\mathcal{R}\right)
&=&\sqrt{-h}\left(\frac{1}{2}h^{ab}\nabla_\delta h_{ab}\mathcal{R}+\nabla_\delta \mathcal{R}\right)
\end{eqnarray}
for the curvatures of $(\Sigma,h)$.
In particular, substituting Eq.~(\ref{eq:defo-connection}) into the identities, we obtain
\begin{eqnarray}
\label{eq:deformation-ricci-scalar}
\nabla_\delta\mathcal{R}
%&=&-\mathcal{R}^{cd} D_\delta\gamma_{cd}+\gamma^{cd}\left(\nabla_a D_\delta\gamma_{cd}{}^a-\nabla_d D_\delta\gamma_{ca}{}^a\right)\nonumber\\
%&=&-2\mathcal{R}^{cd} K_{cd}{}^i\Phi_i\nonumber\\
%&&+\gamma^{cd}\Bigl[
%\gamma^{ab}\nabla_a\left(\nabla_c\left(K_{db}{}^i\Phi_i\right)+\nabla_d\left(K_{cb}{}^i\Phi_i\right)-\nabla_b\left(K_{cd}{}^i\Phi_i\right)\right)\nonumber\\
%&&-\gamma^{ab}\nabla_d\left(\nabla_c\left(K_{ab}{}^i\Phi_i\right)+\nabla_a\left(K_{cb}{}^i\Phi_i\right)-\nabla_b\left(K_{ca}{}^i\Phi_i\right)\right)
%\Bigr]\nonumber\\
%&=&-2\mathcal{R}^{cd} K_{cd}{}^i\Phi_i\nonumber\\
%&&+\Bigl[
%\nabla_a\nabla_c\left(K^{aci}\Phi_i\right)+\nabla_a\nabla_c\left(K^{aci}\Phi_i\right)-\nabla_a\nabla^a\left(K^{i}\Phi_i\right)\nonumber\\
%&&-\left(\nabla_a\nabla^a\left(K^{i}\Phi_i\right)+\nabla_a\nabla_c\left(K^{aci}\Phi_i\right)-\nabla_a\nabla_c\left(K^{aci}\Phi_i\right)\right)
%\Bigr]\nonumber\\
&=&-2\mathcal{R}^{cd} \chi_{cd}{}^i\Phi_i+2\nabla^h_a\nabla^h_c\left(\chi^{aci}\Phi_i\right)-2\Delta^h\left(\theta^{i}\Phi_i\right),\\
\label{eq:deformation-det}
\nabla_\delta \left(\sqrt{-h}\mathcal{R}\right)
%&=&\sqrt{-\gamma}\left[\frac{1}{2}\gamma^{ab}D_\delta \gamma_{ab}\mathcal{R}+D_\delta \mathcal{R}\right]\nonumber\\
%&=&\sqrt{-\gamma}\left[\gamma^{ab}K_{ab}{}^i\Phi_i\mathcal{R}+D_\delta \mathcal{R}\right]\nonumber\\
%&=&\sqrt{-\gamma}\Bigl[\gamma^{ab}K_{ab}{}^i\Phi_i\mathcal{R}-2\mathcal{R}^{cd} K_{cd}{}^i\Phi_i\\
%&&+2\nabla_a\nabla_c\left(K^{aci}\Phi_i\right)-2\nabla_a\nabla^a\left(K^{i}\Phi_i\right)\Bigr]\nonumber\\
%&=&\sqrt{-\gamma}\left[\gamma^{ab}K_{ab}{}^i\Phi_i\mathcal{R}-2\mathcal{R}^{cd} K_{cd}{}^i\Phi_i+\nabla_aJ^a\right]\nonumber\\
%&=&\sqrt{-\gamma}\left[-2\left(\mathcal{R}^{ab}-\frac{1}{2}\mathcal{R}\gamma^{ab}\right)K_{ab}{}^i\Phi_i+\nabla_aJ^a\right]\nonumber\\
%&=&\sqrt{-\gamma}\left[-2\mathcal{G}^{ab}K_{ab}{}^i\Phi_i+\nabla_aJ^a\right]\nonumber\\
&=&-2\sqrt{-h}\mathcal{G}^{ab}\chi_{ab}{}^i\Phi_i+\partial_a\left(\sqrt{-h}J^a\right),
\end{eqnarray}
where we have used the fact $\nabla_\delta h_{ab}=2\chi_{ab}{}^i\Phi_i$ from Eq.~(3.8) in~\cite{capovilla}, $J^a:=2\left(\nabla^h_c\left(\chi^{aci}\Phi_i\right)-{\nabla^h}^a\left(\theta^{i}\Phi_i\right)\right)$ and $\mathcal{G}_{ab}$ is the Einstein tensor of $(\Sigma,h)$.
Note that, modulo a divergence, Eqs.~(\ref{eq:deformation-ricci-scalar}) and~(\ref{eq:deformation-det}) coincide with Eqs.~(3.11) and~(3.12) of~\cite{capovilla}, respectively.

%%%sec
\section{Example: the general solutions of the perturbation equation}
%\label{sec:example}
\label{app:example}
The perturbation equation Eq.(\ref{eq:stability-another}),
\begin{equation}
\nabla^h_a\nabla^h_b\Phi-\frac{1}{d}\Delta^h\Phi h_{ab}=\left(\mathcal{R}_{ab}-\frac{\mathcal{R}}{d}h_{ab}\right)\Phi,\nonumber
\end{equation}
for CMC photon surfaces was derived in Sec.~\ref{sec:stability}.
Here we solve the equation explicitly and derive the general solutions in specific cases.
\par
Consider the case where the induced metric $h$ on the photon surface $\Sigma$ is given by
\begin{equation}
\label{eq:static-const-curvature-surface}
h=a^2(-dt^2+\sigma)
\end{equation}
where $a$ is a constant and $\sigma$ is the metric of the $(d-1)$-dimensional space of constant curvature $\alpha=0,\pm1$.
This corresponds to the static cases of~\cite{barcelo,kokubu} with an appropriate scaling of time $t$.
We express the space part as
\begin{equation}
\sigma=d\chi^2+s^2(\chi)\Omega_{d-2}
\end{equation}
with $s(\chi)=\chi,\sin(\chi),\sinh(\chi)$ for $\alpha=0,+1,-1$, respectively, and the metric of the unit $(d-2)$-sphere $\Omega_{d-2}$.
Hereafter we omit the subscript $d-2$ of $\Omega_{d-2}$.
For simplicity, we scale $h$ by a constant so that
\begin{equation}
\label{eq:static-const-curvature-surface-normalized}
h=-dt^2+\sigma
\end{equation}
and solve the perturbation equation, Eq.~(\ref{eq:stability-another}), with this induced metric in the following.
Note that it is sufficient to specify only the intrinsic geometry $(\Sigma,h)$ to solve the equation.
%%%subsec
\subsection{$\alpha=\pm 1$}
Consider $\alpha=\pm 1$ cases.
The Ricci curvatures of $(\Sigma,h)$ are given by
\begin{eqnarray}
\mathcal{R}_{tt}=0,\;\;\mathcal{R}_{ti}=0\;\;\mathcal{R}_{ij}=\alpha \sigma_{ij},\;\;\mathcal{R}=(d-1)\alpha
\end{eqnarray} 
where $i,j=\chi,x^A$ and $x^A\;(A=1,...,d-2)$ are the spherical coordinates of $\Omega$.
The double null coordinates $\{\lambda_\pm:=t\pm\chi,x^A\}$ give null geodesics of $(\Sigma,h)$ with the tangents
\begin{equation}
k_\pm=\partial_{\lambda_\pm}.
\end{equation}
From the fact that $\mathcal{R}_{k_\pm k_\pm}=\alpha$ and the contraction of Eq.~(\ref{eq:stability-another}) with $k_\pm$, we have
\begin{equation}
\partial_{\lambda_\pm}^2\Phi(\lambda_+,\lambda_-,x^A)=\frac{1}{4}\alpha\Phi(\lambda_+,\lambda_-,x^A).
\end{equation}
The integration of the ``plus" one of the equation gives
\begin{equation}
\Phi(\lambda_+,\lambda_-,x^A)
=A(\lambda_-,x^A)e^{\sqrt{\alpha}\lambda_+/2}+B(\lambda_-,x^A)e^{-\sqrt{\alpha}\lambda_+/2}
\end{equation}
for the arbitrary functions $A(\lambda_-,x^A)$ and $B(\lambda_-,x^A)$.
Substituting this into the ``minus'' one, we have
\begin{eqnarray}
&&\partial_{\lambda_-}^2A(\lambda_-,x^A)=\frac{1}{4}\alpha A(\lambda_-,x^A),\\
&&\partial_{\lambda_-}^2B(\lambda_-,x^A)=\frac{1}{4}\alpha B(\lambda_-,x^A)
\end{eqnarray}
leading to
\begin{eqnarray}
&&A(\lambda_-,x^A)=C(x^A)e^{\sqrt{\alpha}\lambda_-/2}+D(x^A)e^{-\sqrt{\alpha}\lambda_-/2},\\
&&B(\lambda_-,x^A)=E(x^A)e^{\sqrt{\alpha}\lambda_-/2}+F(x^A)e^{-\sqrt{\alpha}\lambda_-/2}.
\end{eqnarray}
Therefore, we obtain
\begin{eqnarray}
\label{eq:Phi-ansatz}
\Phi&=&C(x^A)e^{\sqrt{\alpha}(\lambda_++\lambda_-)/2}+D(x^A)e^{\sqrt{\alpha}(\lambda_+-\lambda_-)/2}
+E(x^A)e^{-\sqrt{\alpha}(\lambda_+-\lambda_-)/2}+F(x^A)e^{-\sqrt{\alpha}(\lambda_++\lambda_-)/2}\nonumber\\
&=&C(x^A)e^{\sqrt{\alpha}t}+D(x^A)e^{\sqrt{\alpha}\chi}
+E(x^A)e^{-\sqrt{\alpha}\chi}+F(x^A)e^{-\sqrt{\alpha}t}
\end{eqnarray}
for the arbitrary functions $C(x^A)$, $D(x^A)$, $E(x^A)$, and $F(x^A)$.
\par
The Christoffel symbols $\Gamma^a{}_{bc}$ with respect to $h_{ab}$ are calculated as
\begin{eqnarray}
\label{eq:christoffel-h}
&&\Gamma^t{}_{ab}=\Gamma^a{}_{tb}=\Gamma^\chi{}_{\chi\chi}=\Gamma^\chi{}_{\chi A}=0\nonumber\\
&&\Gamma^\chi{}_{AB}=-ss'\Omega_{AB},\;\;\Gamma^A{}_{\chi\chi}=0,\;\;\Gamma^A{}_{\chi B}=\frac{s'}{s}\delta^A_B,\;\;\Gamma^A{}_{BC}={^\Omega\Gamma}^A{}_{BC}
%%%old
%&&\Gamma^t{}_{ab}=0,\nonumber\\
%&&\Gamma^a{}_{tb}=0,\nonumber\\
%&&\Gamma^\chi{}_{\chi\chi}=0,\nonumber\\
%&&\Gamma^\chi{}_{\chi A}=0\nonumber\\
%&&\Gamma^\chi{}_{AB}=-ss'\Omega_{AB},\nonumber\\
%&&\Gamma^A{}_{\chi\chi}=0,\nonumber\\
%&&\Gamma^A{}_{\chi B}=\frac{s'}{s}\delta^A_B,\nonumber\\
%&&\Gamma^A{}_{BC}=^\Omega\Gamma^A{}_{BC}
\end{eqnarray}
where ${^\Omega\Gamma}^A{}_{BC}$ is the Christoffel symbol with respect to $\Omega_{AB}$.
The nondiagonal components of Eq.~(\ref{eq:stability-another}) reduce to
\begin{equation}
\label{eq:pert-nondiag}
%\nabla^h_a\nabla^h_b\Phi=
%\partial_a\partial_b\Phi-\Gamma^c{}_{ab}\partial_c\Phi=
\partial_a\partial_b\Phi-\Gamma^i{}_{ab}\partial_i\Phi=0\;\;\;(a\neq b).
\end{equation}
For $(a,b)=(t,\chi)$, Eq.~(\ref{eq:pert-nondiag}) gives
\begin{equation}
\partial_t\partial_\chi\Phi=0
\end{equation}
and this is consistent with Eq.~(\ref{eq:Phi-ansatz}).
For $(a,b)=(t,A)$, we have
\begin{equation}
\partial_t\partial_A\Phi=\sqrt{\alpha}\left[C(x^A)_{,A}e^{\sqrt{\alpha}t}-F(x^A)_{,A}e^{-\sqrt{\alpha}t}\right]=0
\end{equation}
by using Eq.~(\ref{eq:Phi-ansatz}).
This implies $C(x^A)_{,A}=F(x^A)_{,A}=0$ and thus they are constant,
\begin{equation}
\label{eq:CF-const}
C(x^A)=C,\;\;\;F(x^A)=F.
\end{equation}
For $(a,b)=(\chi,A)$, we have
\begin{equation}
\partial_\chi\partial_A\Phi-\frac{s'}{s}\partial_A\Phi=
D(x^A)_{,A}\left[\sqrt{\alpha}-\frac{s'}{s}\right]e^{\sqrt{\alpha}\chi}
-E(x^A)_{,A}\left[\sqrt{\alpha}+\frac{s'}{s}\right]e^{-\sqrt{\alpha}\chi}=0.
%=\sqrt{\alpha}\left[D(x^A)_{,A}e^{\sqrt{\alpha}\chi/2}-E(x^A)_{,A}e^{-\sqrt{\alpha}\chi/2}\right]
%-\frac{s'}{s}\left[D(x^A)_{,A}e^{\sqrt{\alpha}\chi/2}+E(x^A)_{,A}e^{-\sqrt{\alpha}\chi/2}\right]=0.
\end{equation}
This implies $D(x^A)_{,A}=E(x^A)_{,A}=0$ and thus they are constant,
\begin{equation}
\label{eq:DE-const}
D(x^A)=D,\;\;\;E(x^A)=E.
\end{equation}
Now we have
\begin{equation}
\label{eq:Phi-const-coefficients}
\Phi=Ce^{\sqrt{\alpha}t}+De^{\sqrt{\alpha}\chi}
+Ee^{-\sqrt{\alpha}\chi}+Fe^{-\sqrt{\alpha}t}
\end{equation}
from Eqs.~(\ref{eq:Phi-ansatz}),~(\ref{eq:CF-const}), and~(\ref{eq:DE-const}).
\par
From Eq.~(\ref{eq:Phi-const-coefficients}), we have $\partial_A\Phi=0$ and thus,
\begin{equation}
\nabla^h_a\nabla^h_b\Phi=\partial_a\partial_b\Phi-\Gamma^\chi{}_{ab}\partial_\chi\Phi.
\end{equation}
Together with Eq.~(\ref{eq:christoffel-h}), we have
\begin{equation}
\Delta^h\Phi=-\partial_t^2\Phi+\partial_\chi^2\Phi+(d-2)\frac{s'}{s}\partial_\chi\Phi.
\end{equation}
Then the $tt$-, $\chi\chi$-, and $AB$-component of Eq.~(\ref{eq:stability-another}) give
\begin{eqnarray}
\label{eq:phi-diag}
\partial_t^2\Phi+\frac{1}{d}\left[-\partial_t^2\Phi+\partial_\chi^2\Phi+(d-2)\frac{s'}{s}\partial_\chi\Phi\right]
&=&
\frac{d-1}{d}\alpha\Phi,\nonumber\\
\partial_\chi^2\Phi-\frac{1}{d}\left[-\partial_t^2\Phi+\partial_\chi^2\Phi+(d-2)\frac{s'}{s}\partial_\chi\Phi\right]
&=&\frac{1}{d}\alpha\Phi,\nonumber\\
\frac{s'}{s}\partial_\chi\Phi-\frac{1}{d}\left[-\partial_t^2\Phi+\partial_\chi^2\Phi+(d-2)\frac{s'}{s}\partial_\chi\Phi\right]
&=&\frac{1}{d}\alpha\Phi,
\end{eqnarray}
respectively.
The sum of the first and second equations gives
\begin{equation}
\partial_t^2\Phi+\partial_\chi^2\Phi=\alpha \Phi
\end{equation}
and this is already satisfied by Eq.~(\ref{eq:Phi-const-coefficients}).
The subtraction of the second equation times $(d-1)$ from the first equation gives
\begin{equation}
D\left[\sqrt{\alpha} s-s'\right]e^{\sqrt{\alpha}\chi}+E\left[\sqrt{\alpha} s+s'\right]e^{-\sqrt{\alpha}\chi}=0
\end{equation}
with the substitution of Eq.~(\ref{eq:Phi-const-coefficients}).
This requires
\begin{equation}
D=E=0
\end{equation}
and therefore $\partial_\chi\Phi=0$.
The third equation in Eq.~(\ref{eq:phi-diag}) then reduces to
\begin{equation}
\partial_t^2\Phi=\alpha\Phi
\end{equation}
and this is already satisfied by Eq.~(\ref{eq:phi-diag}) with the vanishing of $D$ and $E$. 
\par
As a consequence, the general solution of Eq.~(\ref{eq:stability-another}) with the geometry Eq.~(\ref{eq:static-const-curvature-surface-normalized}) for $\alpha=\pm1$ is given by
\begin{equation}
\Phi=Ce^{\sqrt{\alpha}t}+Fe^{-\sqrt{\alpha}t}
\end{equation}
with the arbitrary constants $C$ and $F$.
%%%subsec
\subsection{$\alpha=0$}
Consider the $\alpha=0$ case of the geometry of Eq.~(\ref{eq:static-const-curvature-surface-normalized}).
Since the geometry $(\Sigma,h)$ is the $d$-dimensional Minkowski spacetime, we adopt Cartesian coordinates $\{t,x^i\}$ on it and the curvatures and the Christoffel symbols identically vanish.
Then the master equation Eq.~(\ref{eq:stability-another}) for the perturbation $\Phi$ reduces to
\begin{equation}
\label{eq:stability-flat}
\partial_a\partial_b\Phi-\frac{1}{d}\eta^{cd}\partial_c\partial_d\Phi \eta_{ab}=0.
\end{equation}
From the nondiagonal components, $\Phi$ must be the sum of one-variable functions of $t$ and $x^i$.
We express it as
\begin{equation}
\label{eq:phi-sum}
\Phi=f_t(t)+\sum_{j=1}^{d-1}f_j(x^j)
\end{equation}
with the arbitrary functions $f_t(t)$ and $f_i(x^i)$.
The $tt$- and $ii$-component give
\begin{eqnarray}
(d-1)\partial_t^2\Phi+\sum_{j=1}^{d-1}\partial_j^2\Phi&=&0,\\
\partial_t^2\Phi+d\partial_i^2\Phi-\sum_{j=1}^{d-1}\partial_j^2\Phi&=&0,
\end{eqnarray}
respectively.
The equations give
\begin{equation}
\partial_t^2\Phi+\partial_i^2\Phi=0
\end{equation}
by the summation of them.
From Eq.~(\ref{eq:phi-sum}), this leads to
\begin{equation}
-\partial_t^2f_t(t)=\partial_i^2f_i(x^i)=2D
\end{equation}
with the arbitrary constant $D$.
Integrating the equations, we have the general solution of $\Phi$ for $\alpha=0$,
\begin{equation}
\Phi=D\left[-t^2+\sum_{j=1}^{d-1}(x^j)^2\right]+C_tt+\sum_{j=1}^{d-1}C_jx^j+C,
\end{equation}
with the arbitrary constants $D$, $C_t$, $C_i$, and $C$.
The solution satisfies Eq.~(\ref{eq:stability-flat}), indeed.
\par
The solution can be rewritten as
\begin{equation}
\Phi=D\eta_{ab}x^ax^b+B_ax^a+C
\end{equation}
with $x^a=(t,x^i)$ and $B_a=(C_t,C_i)$.
$C$ represents the perturbation parallel to $\Sigma$, or the shift of $\Sigma$.
It displaces $\Sigma\to\Sigma_\varepsilon$ by a constant distance $C\varepsilon$ at each point $p\in\Sigma$.
$B_a$ rotates $\Sigma$ with the fixed axes $A^a$ given by $B_aA^a=0$.
Points satisfying $x^a\propto A^a$ are fixed by the perturbation and the set of $A^a$ spans $(d-1)$-dimensional surface, actually.
If the surface is timelike, the perturbation is a spatial rotation of $\Sigma$ while if spacelike, it is a Lorentz boost of $\Sigma$.
$D$ provides the perturbation which depends only on the length of $x^a$ and fixes the origin $x^a=0$ and the null rays $\eta_{ab}x^ax^b=0$ passing the origin.
We can understand $D$ as follows.
\par
To imagine the effect of $D$, let us consider that $\Sigma$ is embedded into $(d+1)$-dimensional Minkowski spacetime $\mathbb{M}^{d+1}$ by the embedding $x^a\in\Sigma\mapsto x^\mu=(x^a,y=0)\in M^{d+1}$, for example.
Suppose $B_a=C=0$ for simplicity.
The photon surface $\Sigma$ is given by $y=0$ with the normal vector $N^\mu=(0,1)$.
The perturbed photon surface $\Sigma_\varepsilon$ is given by 
\begin{equation}
x_\varepsilon^\mu=x^\mu+\Phi N^\mu\varepsilon=\left(x^a,D\varepsilon\eta_{ab}x^ax^b\right)
\end{equation}
from Eqs.~(\ref{eq:expansion-epsilon}) and~(\ref{eq:definition-Phi}).
In the case of $d=2$ and $x^a=(t,x)$ for simplicity, it is $x_\varepsilon^\mu=\left(t,x,D\varepsilon(-t^2+x^2)\right)$, or expressed as the quadratic equation,
\begin{equation}
-t^2+x^2-\frac{1}{D\varepsilon}y=0.
\end{equation}
This coincides with the expansion about $y$ of the one-sheeted hyperboloid given by
\begin{equation}
\label{eq:hyperboloid-in-minkowski}
-t^2+x^2+(y-a)^2=a^2
\end{equation}
in $\mathbb{M}^3$ around $y=0$ in linear order, where the ``radius" $a$ is specified by $a=(2D\varepsilon)^{-1}$.
In fact, timelike planes and one-sheeted hyperboloids are known to be timelike photon surfaces of Minkowski spacetime~\cite{claudel}.
Furthermore, by the large radius limit, or equivalently the zero-curvature limit, $a\to\infty$, the local geometry of the hyperboloid approaches to that of planes.
The limit corresponds to $\varepsilon\to0$ in our case.
Therefore, the parameter $D$ gives the perturbation of the plane $\Sigma$ to a hyperboloid $\Sigma_\varepsilon$ of infinitely large radius $a=(2D\varepsilon)^{-1}$.
Note that although the local geometries coincide with each other in the limit, their global topologies, which would be subject to the nonlinear order, in Minkowski spacetime are different.
It is also worth noting that the hyperboloid in Minkowski spacetime has the geometry of de Sitter spacetime~\cite{claudel}.
%%%subsec
\subsection{Stability}
The scaling of Eq.~(\ref{eq:static-const-curvature-surface-normalized}) to Eq.~(\ref{eq:static-const-curvature-surface}) gives $\alpha\to a^{-2}\alpha $ and $t\to at$.
As a result, the most general solutions of the linear perturbation $\Phi$ of the photon surface which has the surface geometry of Eq.~(\ref{eq:static-const-curvature-surface}) are
\begin{equation}
\label{eq:Phi-solution-ex}
\Phi=Ce^{\sqrt{\alpha}t}+Fe^{-\sqrt{\alpha}t}\;\;\;(\alpha=\pm 1)
\end{equation}
with the arbitrary constants $C,F$ and
\begin{equation}
\label{eq:Phi-solution-ex-flat}
\Phi=D\eta_{ab}x^ax^b+B_ax^a+C\;\;\;(\alpha=0)
\end{equation}
with the arbitrary constants $D$, $B_a$, and $C$.
The photon surface is stable, i.e. $\Phi$ is bounded, against all the possible perturbations if and only if the spatial geometry is hyperbolically symmetric, $\mathcal{R}=(d-1)\alpha<0$.
The case is where $-C_{kNkN}=\mathcal{R}_{kk}<0$ for any null vector $k\in T_p\Sigma$ at any point $p\in\Sigma$, i.e. $\Sigma$ is a strictly stable photon surface~\cite{koga:psfstability} as we expected.
The $\alpha=0$ case corresponds to a marginally stable case where $-C_{kNkN}=\mathcal{R}_{kk}=0$.
If one perturbs $\Sigma$ with the initial condition $\partial_a\Phi|_{t=t_0}=0$, it leads to $D=B_a=0$ and the deviation remains constant, $\Phi=C$, and is bounded.
It is worth noting that any perturbation violating the spatial symmetry of the surface is not allowed for $\alpha=\pm1$.
The relatively high degrees of freedom of the perturbation for $\alpha=0$ come from that the geometry $(\Sigma,h)$ restores the maximal symmetry on it.
\par
If we apply the results to the perturbation of Z2PTJST in Sec.~\ref{sec:stability}, it is consistent with the $\Lambda$-vacuum case of~\cite{barcelo,kokubu} and implies that the perturbations the authors investigated for the spherically and hyperbolical symmetric cases are the most general under the $Z_2$-symmetry of the JST in the sense of Eq.~(\ref{eq:Phi-solution-ex}).
\par
Eq.~(\ref{eq:cmcpsf-perturbation}) or~(\ref{eq:stability-another}) can shed light on seeking photon surfaces around a given photon surface.
This is because the existence of the possible linear perturbations of a photon surface should imply the existence of nearby photon surfaces.
The result, Eq.~(\ref{eq:Phi-solution-ex}), in the example tells us that, in the vicinity of the spherically and hyperbolically symmetric photon surface of $\Lambda$-vacuum spacetime, there would be no photon surface which does not have the same spatial symmetry.
Eq.~(\ref{eq:Phi-solution-ex-flat}) indicates that there exist photon surfaces around a given planar photon surface.
They are obtained by the shift, the rotation, the boost, and the transformation to hyperboloids.

% Create the reference section using BibTeX:
%\bibliography{basename of .bib file}
%\bibliography{sample}
%
\bibliography{psf_as_wht}

\end{document}